\theoremstyle{plain}
\newtheorem{thm}{\protect\theoremname}
\theoremstyle{plain}
\newtheorem{prop}{\protect\propositionname}
\providecommand{\propositionname}{Proposition}
\providecommand{\theoremname}{Theorem}
\begin{document}
\title{Relativistic cosmology and intrinsic spin of matter: Results
and theorems in Einstein-Cartan theory}
\author{
\vskip -0.5cm Paulo Luz}
\email{paulo.luz@ist.utl.pt}
\affiliation{Centro de Astrof\'isica e Gravita\c c\~ao -
CENTRA, 
Instituto Superior T\'ecnico - IST, Universidade de Lisboa - UL,
Avenida Rovisco
Pais 1, 1049-001 Lisboa}
\affiliation{Departamento de Matem\'atica, ISCTE - Instituto
Universit\'ario de Lisboa, Avenida das For\c cas Armadas,
1649-026 Lisboa, Portugal}

\author{
\vskip -0.3cm Jos\'e P. S. Lemos}
\email{joselemos@ist.utl.pt}
\affiliation{Centro de Astrof\'isica e Gravita\c c\~ao -
CENTRA, Departamento de F\'isica,
Instituto Superior T\'ecnico - IST, Universidade de Lisboa - UL,
Avenida Rovisco
Pais 1, 1049-001 Lisboa}

\begin{abstract}

We start by presenting the general set of structure equations for the
1+3 threading spacetime decomposition in 4 spacetime dimensions, valid
for any theory of gravitation based on a metric compatible affine
connection. We then apply these equations to the study of cosmological
solutions of the Einstein-Cartan theory in which the matter is
modeled by a perfect fluid with intrinsic spin.  It is shown that the
metric tensor can be described by a generic FLRW solution.  However,
due to the presence of torsion the Weyl tensors might not vanish.  The
coupling between the torsion and Weyl tensors leads to the conclusion
that, in this cosmological model, the universe must either be flat or
open, excluding definitely the possibility of a closed universe.  In
the open case, we derive a wave equation for the traceless part of the
magnetic part of the Weyl tensor and show how the intrinsic spin of
matter in a dynamic universe leads to the generation and emission of
gravitational waves. Lastly, in this cosmological model, it is found
that the torsion tensor, which has an intrinsic spin as its source,
contributes to a positive accelerated expansion of the
universe. Comparing the theoretical predictions of the model with the
current experimental data, we conclude that torsion cannot
completely replace the role of a cosmological constant.

\end{abstract}

\maketitle

\section{Introduction}

\subsection{The torsion tensor field and the Einstein-Cartan theory }

A Riemannian spacetime geometry is uniquely described by the metric
tensor field, in that the Riemann curvature tensor is solely given by
the metric and its first and second derivatives.  Moreover, the affine
connection, which is the structure that defines parallel transport
between tensors, is metric compatible and symmetric in Riemannian
geometry, and called Levi-Civita connection.  The geodesic equation,
which determines the shortest and longest curves between two
infinitesimally close points, is in Riemannian geometry
also an equation
for curves that transport tangent vectors in a autoparallel
manner. General relativity, a spacetime theory of gravitation, has as
one of its intrinsic assumptions the fact that the underlying geometry
is Riemannian.  In general relativity, the link between geometry and
matter is provided by Einstein equation, that equates the Einstein
tensor, which is a contraction of the Riemann tensor, to the matter
stress-energy tensor.

A possible extension of Riemannian geometry is the Riemann-Cartan
geometry in which, besides the metric tensor, there is an extra
geometrical field, the torsion tensor.  The Riemann
curvature tensor depends now
on both metric and torsion.  Moreover, the metric compatible
affine connection, between
tensors, contains not only a symmetric part as
in Riemannian geometry, but also an antisymmetric part, which is
precisely the torsion tensor. In a Riemann-Cartan geometry, geodesic
and autoparallel curves are different types of curves.  

A natural
extension of general relativity to another theory of gravitation is
the Einstein-Cartan theory where the underlying geometry is the
Riemann-Cartan geometry.  
A realization of an Einstein-Cartan gravity theory is such that
the field equations are still derived from the Einstein-Hilbert action
\citep{Sciama_Book,Kibble_1961}, representing one of the simplest
generalizations of general relativity. The link between geometry
and matter is now
provided by the Einstein equation, that equates the Einstein tensor to the
canonical
matter stress-energy tensor, plus an equation relating a tensor field built
out of the geometrical torsion to some physical observables associated,
for instance,
to the density of the intrinsic angular momentum of matter, or spin.
One of the interests in the Einstein-Cartan theory, 
within a geometric theory of gravitation, is that at extremely high
densities of matter, even at densities still much less than the Planck
density regime where quantum gravity rules, quantum effects on the
matter may be considerable, hence the ability to include quantum corrections
in a macroscopic limit, through the relation between torsion
and intrinsic spin,
might set the Einstein-Cartan theory to be a better
classical limit of a quantum theory
of gravitation than a theory without torsion like general relativity.
Nonetheless, as
we will show in this article, even in the low density regime, the inclusion of
torsion might lead to marked contrast between the predictions
of the two theories,
which may be used to falsify the hypothesis.

The framework of the Einstein-Cartan theory has led to many important
results, showing how the extra geometrical structure, specifically,
the torsion tensor, influences the
behavior of the matter fluids that permeate the spacetime and,
consequently, the geometry of the manifold.  Several works have worked
out the properties of spacetimes with torsion and the consequences of
the Einstein-Cartan theory, in particular in black hole physics and in
cosmology.  We mention some of them.  The possibility of measuring
torsion was raised in \cite{Hehl_1971}.
Some works showed that the
inclusion of torsion could act as a repulsive force, counteracting the
gravitational collapse and possibly prevent the formation of
singularities both in black holes and cosmology
\citep{Trautman_1973,Stewart_Hajicek_1973}.  There were applications
in cosmology \cite{Kopczynski_1973,Tafel_1973} as well as in rotating
neutron stars \cite{Kerlick_1973}, see also
\cite{Hehl_Heyde_Kerlick_1974,Hehl_Heyde_Kerlick_Nester_1976}.
The generation of solutions in Einstein-Cartan theory was provided in
\cite{Tsamparlis_1981}.
Some
interesting spinning fluids, in particular the Weyssenhoff fluid, were
introduced as generators of torsion in \cite{Ray_Smalley_1983}, as
providers of inflation in \cite{Gasperini_1986}, and as sources of
rotating cosmological models in \cite{Obukhov_Korotky_1987}.
The possibility that the Einstein-Cartan theory is a limit
of a quantum theory of the
gravitational field operating at
the usual
microscopic and macroscopic scales
has been hinted in \citep{Trautman_inBook_2006}.
The consequences and imprints on the curves followed by finite size
test bodies was discussed in \cite{Puetzfeld_Obukhov_2007}.  A review
of Einstein-Cartan theory is in \citep{Blagojevic_Hehl_book_2013}.  A
further discussion on compact objects was given in
\cite{Luz_Sante_2019}, and a study of the cosmological signatures of
torsion and cosmic acceleration appears in
\cite{Bolejko_Cinus_Roukema_2020,Iosifidis_2021}.

\subsection{The 1+3 spacetime decomposition}

Due to the action of the torsion tensor in physical frames of
reference, in gravitational theories with torsion, it is advantageous
to work in a formalism that is manifestly covariant and such that the
quantities that characterize the spacetime and the matter are directly
associated with physical observables.  A formalism with such
characteristics is the covariant spacetime decomposition approach
which is designed to take directly into account the symmetries and
preferred directions in a manifold, and emerges as a powerful tool to
analyze the geometry and dynamics of tensor fields on a spacetime.
A benefit one takes from the formalism
comes from the fact that it is, by construction, independent of 
coordinate systems. Moreover, the natural
splitting of the manifold can greatly simplify the problem of finding
solutions when the spacetime admits the existence of preferred directions,
such as Killing vector fields.

A particular covariant spacetime decomposition is the 
1+3 formalism that has been developed and used in
many instances in 
general relativity
\citep{Ehlers_1961,
Stewart_Ellis_1968,Ellis_1971,Ellis_vanElst_1999,
Ellis_Maartens_MacCallum_Book}.
In this formalism, it is assumed the existence of a congruence of
smooth curves, so that any tensor quantity on the spacetime is, at
each point, separated in its component along the direction of the
tangent vector field to the congruence and
in its components along the surfaces orthogonal
to the curves of the congruence. This property
of the formalism makes it especially useful from a physical point
of view, since in many instances
one is interested in studying the evolution
of certain quantities in time. Thus, assuming the existence of a timelike
congruence, the 1+3 formalism naturally decomposes the structure equations
that describe the geometry of the spacetime and tensor fields in
the manifold along time and spatial directions.
In a geometric theory of gravity, the geometry
of the spacetime is related with the matter fields that permeates
it. Since the evolution and constraint equations found from the 1+3
formalism are completely covariant, they provide a clearer interpretation
of the relations between the kinematics of the congruence and the
properties of the matter fields. 

This formalism of 1+3 decomposition of the spacetime manifold has been
extensively employed to explore the properties of solutions of
theories of gravitation, namely, gravitational waves, cosmological
solutions, compact objects, black holes, singularities, and particle
and light rays propagation.  For instance, the formalism has been used
in general relativity to study cosmological perturbations and their
consequent gravitational waves generation 
\cite{Dunsby_Bassett_Ellis_1997}, to analyse singularities and
singularity theorems  \cite{Senovilla_1998}, to develop an effective
fluid dynamics formalism  \cite{Brechet_Hobson_Lasenby_(2007)}, to
find new properties of perturbed Schwarzschild black holes
\cite{Clarkson_Barrett_2003}, to further investigate the
Tolman-Oppenheimer-Volkoff equation \cite{Carloni_Vernieri_2008}, to
discuss cosmological perfect fluid perturbations 
\cite{Tornkvist_Bradley_2019}, and to analyze objects composed of two
fluids  \cite{Naidu_Carloni_Dunsby_2021}.
The formalism has also been
applied in $f\left(R\right)$ modified theories of
gravitation to study gravitational lensing  
\cite{Dunsby_Goswami_2011},
to introduce
black holes with emphasis in the Weyl terms  \cite{Pratten_2015},
to describe cosmological density perturbations \cite{Carloni_Troisi_2008},
and to search for
gravitational wave solutions \cite{Ananda_Carloni_2008}.
The formalism has further been
applied in theories with torsion
to explore the 
Raychaudhuri
equation \cite{Luz_Vitagliano_2017},
to treat
spacetime thermodynamics \cite{Dey_Liberati_Pranzetti_2017},
to examine
singularity theorems \cite{Luz_Mena_2020},
and to establish focusing condition theorems
\cite{Vinckers_Torralba_2020}.

\subsection{Aim of the work}

Since Hubble showed the velocity-distance relation for distant
galaxies that we know the universe is expanding, and the observations
of the emission spectra of type Ia supernovae have lead to the
conclusion that our universe is expanding at an accelerated rate.
Moreover, the high-precision data from the Hubble Space Telescope,
WMAP, Planck collaboration, Sloan Digital Sky Survey, and JWST, keep
confirming that, at very large scales, the present universe is very
well described by the Friedman-Lema\^itre-Robertson-Walker (FLRW) model
with the matter source having as a main
component an unknown dark energy fluid.
The other component, with its existence also being infered
from the data from rotational curves of
galaxies and from the velocities of individual members
of galaxies in clusters,
could be in unknown forms of dark particles 
or showing 
that the predictions of the
general theory of relativity, even at a classical level, are
incompatible with the observations without the inclusion of extra
fundamental 
fields in alternative gravity theories.
On the other hand, high-accuracy astrometric data of GAIA,
and the regular detection of gravitational waves impose very strong
constraints on these possible alternative gravity theories,
notably on theories of the Einstein-Cartan type.

In this context,
there has been a growing interest in studying the effects of torsion
in the dynamics of the universe by considering types of torsion that
have the intrinsic spin of matter as its source.
Here, we are interested in studying the effects of the intrinsic spin
of matter, within the canonical Einstein-Cartan theory, in the
properties and evolution of the universe at very large scales.  We
will show that there are various aspects that have been overlooked, in
particular we will show that in the Einstein-Cartan theory it is
pivotal to understand the effects of the torsion field in the Weyl
tensor, which to our knowledge have never been considered. As it will
be shown, the coupling between the Weyl
tensor and the torsion tensor field may lead to dramatic disparities
between the predictions of the general theory of relativity and the
Einstein-Cartan theory.
Moreover, although the 1+3 formalism was initially devised to study
solutions of general relativity, the paradigm of covariant spacetime
decomposition is applicable to a much wider class of relativistic
theories of gravitation, including theories of the Einstein-Cartan
type.  We will also present here the most general extension of the 1+3
formalism for spacetimes with a metric compatible affine connection
valid for any metric affine gravity theory. These equations will then
be used in the particular case of the Einstein-Cartan theory and will
be used to study the effects of intrinsic spin in spacetimes permeated
by an homogeneous and isotropic matter fluid.

\subsection{Organization of the work}

This work is organized as follows.
In Sec.~\ref{Section:Conventions}, we introduce several quantities,
namely, the metric, the spacetime connection with torsion, the
curvature, the projection formalism with the decomposition of the
torsion and Weyl tensors and the structure equations, giving as well
the basic definitions and setting the adopted conventions.
In Sec.~\ref{Section:EC_stressenergytensor},  the stress-energy tensor
and the structure equations for the matter fields are given.
In Sec.~\ref{Section:EC_structure_equations},  the
field equations of
the Einstein-Cartan theory for a Weyssenhoff like torsion are found
and compared with the results in the literature.
In Sec.~\ref{Section:Isotropic_universe},  the isotropic universe and
the geometry of the 3-spaces are set as a basis for a relativistic
cosmology in
the new set of equations for the 
Einstein-Cartan theory
for  a universe
permeated by an isotropic and homogeneous matter fluid with nonvanishing
intrinsic spin, and where two theorems and a
proposition are proved.
In Sec.~\ref{Subsec:Wave-equation},  gravitational waves in
relativitstic cosmology in Einstein-Cartan theory are studied.
In Sec.~\ref{Subsec:Tidal-effects-and-dynamics}, we analyze
the tidal effects and the
dynamics of the cosmic fluid in relativistic cosmology in
Einstein-Cartan theory.
In Sec.~\ref{Section:Conclusion},  we further discuss the main results
and conclude.
In the Appendix~\ref{laplacebeltramiharmonics},  we display the
properties of the Laplace-Beltrami harmonics which are used in the
main text.
Throughout the paper we will work in the geometrized unit system where
the constant of gravitation and the speed of light are set to one,
and consider the metric signature $(-+++)$.

\section{Geometry of Lorentzian manifolds with torsion
and the structure equations
for the geometric fields
\label{Section:Conventions}}

\subsection{Metric, connection with torsion, curvature, and projection
formalism}
\subsubsection{Metric, connection with torsion, curvature}

We start by introducing the basic definitions and setting the
conventions that will be used throughout this article.

Let $\left(\mathcal{M},g,S\right)$ be a 4-dimensional
Lorentzian manifold endowed with a metric compatible affine connection.
The metric tensor $g$ is assumed to be symmetric, i.e.,
\begin{equation}
g_{\alpha\beta}=
g_{\left(\alpha\beta\right)}\,,
\label{Conventions_eq:metric}
\end{equation}
with
$g_{\left(\alpha\beta\right)}
\equiv
\frac{1}{2}\left( g_{\alpha\beta}+
g_{\beta\alpha}\right)$,
and the tensor $S$ represents the torsion tensor, defined as the
antisymmetric part of the connection in the lower indices,
and is such that
\begin{equation}
{S_{\alpha\beta}}^\gamma=
{S_{\left[\alpha\beta\right]}}^\gamma\,,
\label{Conventions_eq:torsion}
\end{equation}
with
${S_{\left[\alpha\beta\right]}}^\gamma
\equiv\frac12\left(
{S_{\alpha\beta}}^\gamma-
{S_{\beta\alpha}}^\gamma\right)$.
In 
$\left(\mathcal{M},g,S\right)$
the covariant derivative $\nabla$
is defined 
through an affine connection $C_{\alpha \beta}^\gamma$,
such that on a
$\left( k,m\right)-$tensor $Y$ with components in a local coordinate
system $Y^{\mu_{1}\ldots\mu_{k}}{}_{\nu_{1}\ldots\nu_{m}}
$, is formally given by
$\nabla_{\alpha}Y^{\mu_{1}\ldots\mu_{k}}{}_{\nu_{1}\ldots\nu_{m}}=
\partial_{\alpha}Y^{\mu_{1}\ldots\mu_{k}}{}_{\nu_{1}\ldots\nu_{m}}
+
\sum_{i=1}^{k}
C_{\alpha\rho}^{\mu_{i}}{}\,Y^{\mu_{1}\ldots
\rho\ldots\mu_{k}}{}_{\nu_{1}\ldots\nu_{m}}
-
\sum_{i=1}^{m}
C_{\alpha\nu_{i}}^\rho{}\,Y^{\mu_{1}
\ldots\mu_{k}}{}_{\nu_{1}\ldots\rho\ldots\nu_{m}}
$,
where $\partial_{\alpha}$ represent partial derivatives.
In order that 
the affine connection $C_{\alpha \beta}^\gamma$ be metric $g$ compatible,
one has that $\nabla_{\alpha}g_{\beta\gamma}=
\partial_\alpha g_{\beta\gamma}-
C_{\alpha\beta}^\sigma g_{\sigma\gamma}
-
C_{\alpha\gamma}^\sigma g_{\beta\sigma}
$ has to be identically zero, i.e.,
\begin{equation}
\nabla_{\alpha}g_{\beta\gamma}=0\,.
\label{Conventions_eq:Covariant_definition}
\end{equation}
A metric compatible connection $C_{\alpha \beta}^\gamma$
can always be split into two parts, namely,
\begin{equation}
C_{\alpha\beta}^{\gamma}=\Gamma_{\alpha\beta}^{\gamma}+
K_{\alpha\beta}{}^{\gamma}\,,
\label{Conventions_eq:Connection}
\end{equation}
with 
\begin{equation}
\Gamma_{\alpha\beta}^{\gamma}=\frac{1}{2}g^{\gamma\sigma}
\left(\partial_{\alpha}g_{\sigma\beta}+
\partial_{\beta}g_{\alpha\sigma}-\partial_{\sigma}g_{\alpha\beta}
\right)\,,
\label{Conventions_eq:Christoffel_symbols}
\end{equation}
being the  usual
metric connection that appears in a Riemannian manifold,
also referred as the Christoffel symbols,
and 
\begin{equation}
K_{\alpha\beta}{}^{\gamma}= S_{\alpha\beta}{}^{\gamma}+
S^{\gamma}{}_{\alpha\beta}-
S_{\beta}{}^{\gamma}{}_{\alpha}\,,
\label{Conventions_eq:Contorsion}
\end{equation}
being the contorsion tensor
which is a combination of torsion terms.
From
Eqs.~\eqref{Conventions_eq:Connection}--\eqref{Conventions_eq:Contorsion}
one finds
\begin{equation}
\Gamma_{\alpha\beta}^{\gamma}=
\Gamma_{\left(\alpha\beta\right)}^{\gamma}=
C_{\left(\alpha\beta\right)}^{\gamma}\,,
\label{Conventions_eq:Gamma}
\end{equation}
and 
\begin{equation}
S_{\alpha\beta}{}^{\gamma}
=K_{\left[\alpha\beta\right]}{}^{\gamma}=
C_{\left[\alpha\beta\right]}^{\gamma}\,.
\label{Conventions_eq:Torsion_tensor}
\end{equation}
As well, 
from the antisymmetry of the torsion tensor in the first two
indices, one can verify the following identity for
the contorsion tensor,
\begin{equation}
K_{\alpha\beta\gamma}=K_{\alpha\left[\beta\gamma\right]}
\,,
\label{Conventions_eq:contorsion_identities}
\end{equation}
i.e., 
$
K_{\alpha\left(\beta\gamma\right)}=0$.

The definition of the Riemann curvature tensor associated
with the connection $C_{\alpha\beta}^{\gamma}$ is 
\begin{equation}
R_{\alpha\beta\gamma}\,^{\rho}=
\partial_{\beta}C_{\alpha\gamma}^{\rho}-
\partial_{\alpha}C_{\beta\gamma}^{\rho}+
C_{\beta\sigma}^{\rho}C_{\alpha\gamma}^{\sigma}-
C_{\alpha\sigma}^{\rho}C_{\beta\gamma}^{\sigma}\,.
\label{Conventions_eq:Riemann_tensor_connnection}
\end{equation}
This definition
leads to the following relation between
the commutator of two covariant derivatives
of a tensor 
and the Riemann curvature
tensor, Eq.~\eqref{Conventions_eq:Riemann_tensor_connnection},
\begin{align}
\left(\nabla_{\alpha}
\nabla_{\beta}-\nabla_{\beta}\nabla_{\alpha}+
2S_{\alpha\beta}{}^{\gamma}\nabla_{\gamma}\right)
Y^{\mu_{1}\ldots\mu_{k}}{}_{\nu_{1}\ldots\nu_{m}}=&
\sum_{i=1}^{m}R_{\alpha\beta\nu_{i}}{}^{\rho}\,Y^{\mu_{1}
\ldots\mu_{k}}{}_{\nu_{1}\ldots\rho\ldots\nu_{m}}-
\sum_{i=1}^{k}R_{\alpha\beta\rho}{}^{\mu_{i}}\,Y^{\mu_{1}\ldots
\rho\ldots\mu_{k}}{}_{\nu_{1}\ldots\nu_{m}}
\,,
\label{Conventions_eq:Riemann_tensor_definition}
\end{align}
where $Y$ is an arbitrary $\left(k,m\right)$-tensor field.
The Riemann curvature
tensor, Eq.~\eqref{Conventions_eq:Riemann_tensor_connnection}, possesses
the following symmetries in its indices,
\begin{align}
R_{\alpha\beta\gamma\delta}= R_{[\alpha\beta]\gamma\delta}\,,
\label{Conventions_eq:Riemann_tensor_properties1}
\end{align}
i.e., $R_{(\alpha\beta)\gamma\delta}=0$, and 
\begin{align}
R_{\alpha\beta\gamma\delta}= R_{\alpha\beta[\gamma\delta]}\
\,,
\label{Conventions_eq:Riemann_tensor_properties2}
\end{align}
i.e., $R_{\alpha\beta(\gamma\delta)}=0$. The symmetries
of the Riemann curvature
tensor given in Eqs.~\eqref{Conventions_eq:Riemann_tensor_properties1}
and \eqref{Conventions_eq:Riemann_tensor_properties2}
are the same as in pure Riemannian geometry.
The other index symmetry in pure Riemannian geometry, namely,
$R_{\left[\alpha\beta\gamma\right]}{}^{\delta}=0$, is,
for a geometry with torsion, modified
into an identity related to the
covariant derivative of the torsion,
\begin{align}
2\nabla_{\left[\alpha\right.}S_{\left.\beta\gamma\right]}{}^{\delta}
-
4S_{[\alpha\beta}{}^{\rho}
S_{\gamma]\rho}{}^{\delta}+
R_{\left[\alpha\beta\gamma\right]}{}^{\delta}  =0
\,,\label{Conventions_eq:first_Bianchi_identity}
\end{align}
which can be envisaged as a Bianchi identity for the torsion $S$,
and is in this context called the first Bianchi identity.
We note that the antisymetrization in the second term
of Eq.~\eqref{Conventions_eq:first_Bianchi_identity}
only refers to nondummy indices,
in this case to $\alpha,\beta,\gamma$, with
the dummy index
$\rho$ not being affected by the process, and this is
a convention that we will follow. 
The Riemannian Bianchi identity, namely 
$\nabla_{\left[\alpha\right.}R_{\left.\beta\gamma\right]\delta}{}^{\rho}=0$,
when torsion is present is modified into 
\begin{align}
\nabla_{\left[\alpha\right.}R_{\left.\beta\gamma\right]\delta}{}^{\rho} 
-2S_{[\alpha\beta}{}^{\sigma}R_{\gamma]
\sigma\delta}{}^{\rho}=0\,,
\label{Conventions_eq:second_Bianchi_identity}
\end{align}
and is in this context called the second Bianchi identity.
From the index symmetry identities,
Eqs.~\eqref{Conventions_eq:Riemann_tensor_properties1}
and
\eqref{Conventions_eq:Riemann_tensor_properties2},
and the
first Bianchi
identity, Eq.~(\ref{Conventions_eq:first_Bianchi_identity}),
we find that the usual symmetry of exchanging the first and second
pair of indices of the Riemann tensor is modified in the presence
of torsion to
\begin{equation}
2R_{\gamma\delta\alpha\beta}=2R_{\alpha\beta\gamma\delta}+
3A_{\alpha\gamma\beta\delta}+3A_{\delta\alpha\beta\gamma}+
3A_{\gamma\delta\alpha\beta}+3A_{\beta\delta\gamma\alpha}\,,
\label{Conventions_eq:Riemann_exchange_pair_indices}
\end{equation}
where we have
written $A_{\alpha\beta\gamma\delta}\equiv
-2\nabla_{[\alpha}S_{\beta\gamma]\delta}
+
4S_{[\alpha\beta}{}^{\rho}S_{\gamma]\rho\delta}$
to simplify the visualization of the equation.
We remark that the results presented so far are completely general,
in particular, they are
valid for spacetimes of any dimension.

We will now consider the case
of an orientable Lorentzian manifold $\left(\mathcal{M},g,S\right)$ of
dimension 4. In this case,
a useful quantity to define is the Levi-Civita volume
form, also referred as covariant Levi-Civita tensor or
Levi-Civita 4-form.
Introducing the Levi-Civita symbol, $\eta_{\alpha\beta\gamma\delta}$,
as the totally skew tensor density whose components in any orientation
preserving local coordinate system verify $\eta_{1234}=+1$, the Levi-Civita
volume form is defined as
\begin{equation}
\varepsilon_{\alpha\beta\gamma\delta}\equiv\sqrt{\left|\det g\right|}\,
\eta_{\alpha\beta\gamma\delta}\,,
\label{Conventions_eq:Levi-Civita_tensor_lower}
\end{equation}
where $\left|\det g\right|$ represents the absolute value of the
determinant of the metric tensor. The Levi-Civita volume form verifies
some  useful relations, namely,
(i)~$\nabla_{\rho}\varepsilon_{\alpha\beta\gamma\delta}=0$,
(ii)~$\varepsilon^{\alpha\beta\gamma\delta} =
\frac{\text{sign}\left(\det g\right)}{\sqrt{\left|\det g\right|}}
\eta^{\alpha\beta\gamma\delta}$,
(iii)~$\varepsilon_{\alpha\beta\gamma\delta}\varepsilon^{\rho\sigma\mu\nu}
=-24\,g^{\rho}{}_{\left[\alpha\right.}
g^{\sigma}{}_{\beta}g^{\mu}{}_{\gamma}g_{\left.\delta\right]}{}^{\nu}$,
(iv)~$\varepsilon_{\alpha\beta\gamma\delta}\varepsilon^{\alpha\sigma\mu\nu}
= -6\,g^{\sigma}{}_{\left[\beta
\right.}g^{\mu}{}_{\gamma}g_{\left.\delta\right]}{}^{\nu}$,
(v)~$ \varepsilon_{\alpha\beta\gamma\delta}\varepsilon^{\alpha\beta\mu\nu}
= -4\,g^{\mu}{}_{\left[\gamma\right.}g_{\left.\delta\right]}{}^{\nu}
$, and (vi)~$\varepsilon_{\alpha\beta\gamma\delta}
\varepsilon^{\alpha\beta\gamma\delta} 
=-24$.
The first equality follows from the assumption that the
connection is metric compatible, the second from the properties of the
determinant of a matrix and in (iii) to (vi) only the lower indices
are to be antisymmetrized.

The Weyl tensor $C_{\alpha\beta\gamma\delta}$ is
defined as the trace-free part of the Riemann
curvature
tensor
$R_{\alpha\beta\gamma\delta}$.
In the case of a manifold of dimension 4, the components of the
Weyl curvature tensor, $C_{\alpha\beta\gamma\delta}$, can be
written as 
\begin{equation}
C_{\alpha\beta\gamma\delta}=
R_{\alpha\beta\gamma\delta}-
R_{\alpha\left[\gamma\right.}g_{\left.\delta\right]\beta}+
R_{\beta\left[\gamma\right.}g_{\left.\delta\right]\alpha}+
\frac{1}{3}R\,g_{\alpha\left[\gamma\right.}
g_{\left.\delta\right]\beta}\,,
\label{Conventions_eq:Weyl_tensor_definition}
\end{equation}
where 
$R_{\alpha\beta}\equiv R_{\alpha\mu\beta}{}^{\mu}$ is
the Ricci tensor, and $R\equiv R_{\mu}{}^{\mu}$
is the Ricci scalar.
The Weyl tensor
inherits, from Eq.~\eqref{Conventions_eq:Weyl_tensor_definition},
the following symmetries in its indices,
\begin{align}
C_{\alpha\beta\gamma\delta}= C_{[\alpha\beta]\gamma\delta}\,,
\label{Conventions_eq:Weyl_tensor_properties}
\end{align}
i.e., $C_{(\alpha\beta)\gamma\delta}=0$, and 
\begin{align}
C_{\alpha\beta\gamma\delta}= C_{\alpha\beta[\gamma\delta]}\
\,,
\label{Conventions_eq:Weyl_tensor_properties2}
\end{align}
i.e., $C_{\alpha\beta(\gamma\delta)}=0$. 
In addition, one finds
\begin{align}
C_{\left[\alpha\beta\gamma\right]\delta}  =
R_{\left[\alpha\beta\gamma\right]\delta}+
R_{\left[\alpha\beta\right.}g_{\left.\gamma\right]\delta}
\,.
\label{Conventions_eq:Weyl_tensor_properties3}
\end{align}
In the presence of torsion, the relation between the derivative
of the Weyl tensor and the Riemann tensor is~\citep{Luz_Sante_2019}
\begin{align}
\nabla_{\alpha}C^{\gamma\delta\beta\alpha}= 
\frac{1}{2}\,\varepsilon^{\mu\nu\lambda\beta}
S_{[\mu\nu}{}^{\sigma}R_{\lambda]\sigma\eta\rho}
\varepsilon^{\eta\rho\gamma\delta}+
\frac{3}{2}\left(g^{\beta\delta}S^{[\gamma\mu}{}_{\sigma}R^{\nu]
\sigma}{}_{\mu\nu}-g^{\beta\gamma}S^{[\delta\mu}{}_{\sigma}R^{\nu]
\sigma}{}_{\mu\nu}\right)
+
 \nabla^{[\delta}R^{\gamma]\beta}-
 \frac{1}{6}\,g^{\beta[\gamma}\nabla^{\delta]}R\,,
\label{Conventions_eq:Div_Weyl_trumper}
\end{align}
and
the dummy index $\sigma$ is not involved in the antisymmetrization
process.

\subsubsection{Projector operator, projected covariant Levi-Civita
tensor, and projected covariant derivatives}

Consider a Lorentzian manifold of dimension 4, $\left(\mathcal{M},g,S\right)$,
admitting, in some open neighborhood, the existence of a congruence
of timelike curves with tangent vector field $u$. Without loss of
generality, we can foliate the manifold in 3-surfaces, $\mathcal{V}$,
orthogonal, at each point, to the curves of the congruence, such that
all tensor quantities are defined by their behavior along the direction
of $u$ and in $\mathcal{V}$. This procedure is usually called 1+3
spacetime decomposition. Such decomposition of the spacetime manifold
relies on the existence of a projector to the hypersurface $\mathcal{V}$.
Assuming each curve of the congruence to be affinely parameterized,
so that $u_{\alpha}u^{\alpha}=-1$, the projector onto $\mathcal{V}$,
at each point can be defined as
\begin{equation}
h_{\alpha\beta}\equiv g_{\alpha\beta}+u_{\alpha}u_{\beta}\,,
\label{Projector_eq:orthogonal_projector_timelike}
\end{equation}
verifying
$h_{\alpha\beta}u^{\alpha}  =0$,
$h_{\alpha\beta}  =h_{\beta\alpha}$, 
$h_{\alpha}{}^{\gamma}h_{\gamma\beta}  =h_{\alpha\beta}$,
and
$h_{\gamma}{}^{\gamma}  =3$.

Another useful operator is the projected covariant Levi-Civita tensor
\begin{equation}
\varepsilon_{\alpha\beta\gamma}=
\varepsilon_{\alpha\beta\gamma\sigma}u^{\sigma}\,,
\label{Projector_eq:Projected_1+3_Levi-Civita}
\end{equation}
derived from the Levi-Civita volume form, defined in
Eq.~(\ref{Conventions_eq:Levi-Civita_tensor_lower}),
with the following properties
$\varepsilon_{\alpha\beta\gamma}
=\varepsilon_{\left[\alpha\beta\gamma\right]}$,
$\;\varepsilon_{\alpha\beta\gamma}u^{\gamma} =0$,
$\;\varepsilon_{\alpha\beta\gamma}\varepsilon^{\mu\nu\sigma}
=6h^{\mu}{}_{\left[\alpha\right.}h^{\nu}{}_{\beta}h_{\left.
\gamma\right]}{}^{\sigma}$,
$\;\varepsilon_{\alpha\beta\gamma}\varepsilon^{\mu\nu\gamma}
=2h^{\mu}{}_{\left[\alpha\right.}h_{\left.\beta\right]}{}^{\nu} $,
$\;\varepsilon_{\mu\nu\alpha}\varepsilon^{\mu\nu\beta}
=2h_{\alpha}{}^{\beta}$.
Moreover, using Eq.~(\ref{Projector_eq:orthogonal_projector_timelike})
and the properties of the Levi-Civita volume form, we find the useful
identities
$h_{\mu}{}^{\alpha}h_{\nu}{}^{\beta}
h_{\rho}{}^{\gamma}h_{\lambda}{}^{\sigma}
\varepsilon_{\alpha\beta\gamma\sigma}
=0$
and 
$\varepsilon_{\alpha\beta\gamma\sigma}  =
h_{\alpha}{}^{\mu}\varepsilon_{\mu\beta\gamma\sigma}+
u_{\alpha}\varepsilon_{\beta\gamma\sigma}$.

In order to keep the equations as compact as possible, we
introduce
the following notation
for projected covariant derivatives. Given a tensor field
$Y_{\alpha...\beta}{}^{\gamma...\sigma}$
we define 
\begin{equation}
D_{\mu}Y_{\alpha...\beta}{}^{\gamma...
\sigma}\equiv h_{\mu}{}^{\nu}h_{\alpha}{}^{\rho}...
h_{\beta}{}^{\delta}h_{\lambda}{}^{\gamma}...
h_{\varphi}{}^{\sigma}\nabla_{\nu}Y_{\rho...
\delta}{}^{\lambda...\varphi}\,,
\label{Ddefinition}
\end{equation}
as the fully orthogonally projected covariant derivative
on $\mathcal{V}$.
On the other hand, a dot represents the covariant derivative
along the integral curves of $u$, i.e., 
\begin{equation}
\dot{Y}_{\alpha...\beta}{}^{\gamma...
\sigma}=u^{\mu}\nabla_{\mu}Y_{\alpha...\beta}{}^{\gamma...\sigma}\,.
\label{dotdefinition}
\end{equation}

\subsection{Decomposition of the torsion tensor
and  Weyl tensor}

We now write
the 1+3 decomposition of the torsion tensor $S$,
Eq.~(\ref{Conventions_eq:Torsion_tensor}) and 
the Weyl tensor $C$, Eq.~(\ref{Conventions_eq:Weyl_tensor_definition}),
in terms of their components along the direction of the tangent vector
field $u$ and on $\mathcal{V}$ with the help
of the projector operator $h_{\alpha\beta}$ given in
Eq.~\eqref{Projector_eq:orthogonal_projector_timelike}.

For the torsion tensor, Eq.~(\ref{Conventions_eq:Torsion_tensor}),
the decomposition is~\citep{Luz_Sante_2019}
\begin{equation}
S_{\alpha\beta\gamma}=\varepsilon_{\alpha\beta}{}^{\mu}
\bar{S}_{\mu\gamma}-
u_{[\alpha} W_{\beta]\gamma}+
S_{\alpha\beta}u_{\gamma}+u_{[\alpha}X_{\beta]}u_{\gamma}\,,
\label{SWT_eq:timelike_torsion_decomposition}
\end{equation}
with
\begin{align}
&\bar{S}_{\alpha\beta} =
\frac{1}{2}\varepsilon_{\alpha\mu\nu}
h^{\sigma}{}_{\beta}S^{\mu\nu}{}_{\sigma}\,,\quad\quad
W_{\alpha\beta}  =2u^{\mu}h^{\nu}{}_{\alpha}
h^{\sigma}{}_{\beta}S_{\mu\nu\sigma}\,,\quad\nonumber\\
&S_{\alpha\beta} =-h_{\alpha}{}^{\mu}
h^{\nu}{}_{\beta}u^{\sigma}S_{\mu\nu\sigma}\,,\quad
X_{\alpha} =2u^{\mu}h^{\nu}{}_{\alpha}
u^{\sigma}S_{\mu\nu\sigma}\,.
\label{SWT_eq:timelike_torsion_decomposition_components_def}
\end{align}

For the Weyl tensor, Eq.~(\ref{Conventions_eq:Weyl_tensor_definition}),
the 1+3 decomposition is
\begin{equation}
C_{\alpha\beta\gamma\delta}=-\varepsilon_{\alpha\beta\mu}
\varepsilon_{\gamma\delta\nu}E^{\nu\mu}-
2u_{\alpha}E_{\beta\left[\gamma\right.}
u_{\left.\delta\right]}+2u_{\beta}E_{\alpha\left[\gamma\right.}
u_{\left.\delta\right]}-2\varepsilon_{\alpha\beta\mu}
H^{\mu}{}_{\left[\gamma\right.}u_{\left.\delta\right]}-
2\varepsilon_{\mu\gamma\delta}\bar{H}^{\mu}{}_{\left[\alpha\right.}
u_{\left.\beta\right]}\,,
\label{SWT_eq:Weyl_tensor_1+3_decomposition}
\end{equation}
where 
\begin{align}
E_{\alpha\beta}  =
C_{\alpha\mu\beta\nu}u^{\mu}u^{\nu}\,,\quad
H_{\alpha\beta}  =\frac{1}{2}
\varepsilon_{\alpha}{}^{\mu\nu}C_{\mu\nu\beta\delta}u^{\delta}\,,\quad
\bar{H}_{\alpha\beta} & =
\frac{1}{2}\varepsilon_{\alpha}{}^{\mu\nu}
C_{\beta\delta\mu\nu}u^{\delta}\,,
\label{SWT_eq:Weyl_tensor_magnetic_bar}
\end{align}
are defined as the electric part and the
magnetic part of the Weyl tensor,
respectively. In the Riemann-Cartan geometry
there
are two different tensor quantities associated to the magnetic part
of the Weyl tensor, specifically,
$H_{\alpha\beta}$ and $\bar{H}_{\alpha\beta}$,
such that, in general the presence of torsion lifts a degeneracy in
the magnetic part of the Weyl tensor.
From the results in
Eqs.~\eqref{Conventions_eq:Weyl_tensor_properties}--\eqref{Conventions_eq:Weyl_tensor_properties3},
we see that in the presence of torsion the tensor $E_{\alpha\beta}$
 has the following
properties,
$E_{\alpha\beta} =h_{\alpha}{}^{\mu}h_{\beta}{}^{\nu}E_{\mu\nu}$
and $E^{\alpha}{}_{\alpha}=0$,
the tensor
$H_{\alpha\beta}$ has the following
properties
$H_{\alpha\beta}  =h_{\alpha}{}^{\mu}
h_{\beta}{}^{\nu}H_{\mu\nu}$ and 
$H_{\alpha\beta}  =H_{\left(\alpha\beta\right)}$, 
and the tensor
$\bar{H}_{\alpha\beta}$ 
has the following
properties
$\bar{H}_{\alpha\beta}  =h_{\alpha}{}^{\mu}
h_{\beta}{}^{\nu}\bar{H}_{\mu\nu}$
and
$\bar{H}_{\alpha\beta}  =\bar{H}_{\left(\alpha\beta\right)}$.
Therefore, $E_{\alpha\beta}$ may not be a symmetric tensor and
$H_{\alpha\beta}$ and $\bar{H}_{\alpha\beta}$ do not have to be
trace-free, as in the case of Riemannian geometry. On the other hand,
due to the properties of the
Levi-Civita volume form and the fact that the Weyl tensor is, by
definition, trace free, even in the presence of torsion, the magnetic
parts, $H_{\alpha\beta}$ and $\bar{H}_{\alpha\beta}$, are symmetric
under the exchange of indices.

Many of the results
introduced and to be
introduced are valid or easily extended for manifolds of dimension
$d\geq2$. However, quantities and identities that rely on the
covariant Levi-Civita tensor,
Eq.~(\ref{Conventions_eq:Levi-Civita_tensor_lower}), notably the 1+3
decomposition of the torsion tensor,
Eq.~\eqref{SWT_eq:timelike_torsion_decomposition},
and of the 
Weyl tensor Eq.~\eqref{SWT_eq:Weyl_tensor_1+3_decomposition},
are dimension dependent, hence, the
general set of structure equations that we will
find will depend on the dimension of the
manifold.

\subsection{The separation vector between infinitesimally close curves
of a congruence\label{Subsec:The-separation-vector}}

Having introduced the definitions and properties of the basic
geometric quantities and their decompositions, we will now
consider the notion of separation
vector between infinitesimally close curves of a congruence and relate
its evolution with the kinematical quantities that characterize the
congruence.  For further details see
\citep{Luz_Vitagliano_2017}.

Consider a congruence of curves in some open neighborhood of
$\left(\mathcal{M},g,S\right)$, with tangent vector field
$u$. Given two points $p$ and $q$ in a small enough
neighborhood, such that $p$ is crossed by a curve of the congruence
and $q$ is crossed by a distinct curve of the congruence, the vector
field $n\equiv q-p$ gives a meaningful notion of the separation
between the curves of the congruence. Picking a curve $c$ of the
congruence as the fiducial curve, it is possible to derive an equation
for the change of the separation vector $n$ along the curve $c$.
Indeed, one finds
\begin{equation}
u^{\beta}\nabla_{\beta}n^{\alpha}=B_{\beta}{}^{\alpha}n^{\beta}\,,
\label{Separation_eq:derivative_n_relation}
\end{equation}
where 
\begin{equation}
B_{\beta}{}^{\alpha}=\nabla_{\beta}u^{\alpha}+
2S_{\gamma\beta}{}^{\alpha}u^{\gamma}\,.
\label{Separation_eq:B_tensor_general}
\end{equation}
The tensor $B$ gives 
the evolution of the separation
vector $n$ between two
infinitesimally close curves
along the fiducial curve.
We note that Eq.~(\ref{Separation_eq:derivative_n_relation})
is valid for the case
of $u$ being timelike, spacelike or null, with the fiducial curve
being a geodesic or not, although
we will be interested in the case of a timelike curve.
The first term in the right-hand side of
Eq.~(\ref{Separation_eq:B_tensor_general})
is the usual term present in pure Riemannian geometry,
while the second term in the right-hand side of
Eq.~(\ref{Separation_eq:B_tensor_general})
represents an explicit contribution of the torsion tensor to the evolution
of a congruence of curves.

We can now study some geometrical
implications of Eqs.~(\ref{Separation_eq:derivative_n_relation})
and (\ref{Separation_eq:B_tensor_general}).
Taking
the derivative along
$c$ of the quantity $n_{\alpha}u^{\alpha}$ reads
\begin{equation}
u^{\mu}\nabla_{\mu}\left(n_{\alpha}u^{\alpha}\right)=
n_{\beta}a^{\beta}+2S_{\sigma\gamma\alpha}
u^{\sigma}u^{\alpha}n^{\gamma}\,,
\label{Separation_eq:Dervivative_naUa}
\end{equation}
where the acceleration vector field $a$
in a local coordinate system has components given by 
\begin{equation}
a^{\alpha}=u^{\gamma}\nabla_{\gamma}u^{\alpha}\,.
\label{Separation_eq:acceleration_timelike}
\end{equation}
The expression given in Eq.~(\ref{Separation_eq:Dervivative_naUa})
can be seen as 
the failure of the separation vector $n$ and the tangent vector $u$
to stay orthogonal to each other. Indeed, if at a given point, $n$
and $u$ are orthogonal to each other, a
nonzero acceleration $a$
or a nonzero, general, torsion
$S$  will destroy the preservation
of such orthogonality along the curve.
Thus, this analysis of Eq.~(\ref{Separation_eq:Dervivative_naUa}) leads
to the conclusion that the tensor $B$, describing the behavior of
the separation vector might have, even in the
case of a zero acceleration $a$,
nonzero components tangential and orthogonal to the tangent
vector field associated with the fiducial curve $c$
when  torsion is present. Without loss
of generality, it is then possible to write $B_{\alpha\beta}$ in
terms of two components. One
component, $B_{\perp\alpha\beta}$,
is completely orthogonal to $u$,
and another component, $B_{\parallel\alpha\beta}$,
contains the remaining terms. Given a projector $h_{\alpha\beta}$
onto the surface orthogonal to the curve $c$ at a given point, we
can then write
\begin{equation}
B_{\alpha\beta}=B_{\perp\alpha\beta}+
B_{\parallel\alpha\beta}\,.
\label{Separation_eq:Bab_orth_tang_decomposition}
\end{equation}
Now, 
$B_{\perp\alpha\beta}$ is defined as 
$B_{\perp\alpha\beta}  \equiv  h_{\alpha}{}^{\gamma}
h_{\beta}{}^{\sigma}B_{\gamma\sigma}$.
Furthermore
we can define the kinematical quantities
of the congruence, namley, 
expansion $\theta$, shear $\sigma_{\alpha\beta}$,
and vorticity $\omega_{\alpha\beta}$,
of neighboring curves of the congruence
that only depend on the orthogonal part $B_{\perp}$ of the tensor
$B$, so that
we have the identity
$B_{\perp\alpha\beta}=\frac{h_{\alpha\beta}}{h_{\gamma}{}^{\gamma}}
\theta+\sigma_{\alpha\beta}+\omega_{\alpha\beta}$.
Since we are interested in 1+3 dimensions,
we have $h_{\gamma}{}^{\gamma}=3$, and so
\begin{equation}
B_{\perp\alpha\beta}=\frac13\,
h_{\alpha\beta}\,
\theta+\sigma_{\alpha\beta}+\omega_{\alpha\beta}\,.
\label{Separation_eq:B_orthl_esv_general}
\end{equation}
with
\begin{align}
\theta  =B_{\perp\gamma}{}^{\gamma}\,,\quad
\sigma_{\alpha\beta}  =B_{\perp\left(\alpha\beta\right)}-
\frac13\,h_{\alpha\beta}\,\theta\,,\quad
\omega_{\alpha\beta}  =B_{\perp\left[\alpha\beta\right]}\,,
\label{Separation_eq:kinematical quantities_general}
\end{align}
Then, of course, given a $B_{\perp\alpha\beta}$,
one uses Eq.~\eqref{Separation_eq:Bab_orth_tang_decomposition}
to determine $B_{\parallel\alpha\beta}$ as
$B_{\parallel\alpha\beta}=B_{\alpha\beta}-B_{\perp\alpha\beta}$.
The set of kinematical quantities $\theta$, $\sigma_{\alpha\beta}$,
and $\omega_{\alpha\beta}$, given in
Eq.~\eqref{Separation_eq:kinematical quantities_general}, characterize
a congruence in a Lorentzian manifold and represent one of the
building blocks of covariant spacetime decomposition approaches.
Note further that
the procedure that defines the projector operator
strictly depends on the specific family of curves considered, i.e.,
depends on the tangent vector field $u$.
Once
the projector is assigned, as, e.g., in
Eq.~\eqref{Projector_eq:orthogonal_projector_timelike}, one has that
Eq.~\eqref{Separation_eq:B_tensor_general}
together with Eq.~(\ref{Separation_eq:B_orthl_esv_general})
will give an actual
expression
for the  derivative of the tangent vector $u$
in terms of the kinematical quantities, the tangent vector itself,
its acceleration $a$, and the
torsion tensor $S$.

The results presented here are quite general and valid for curves of any
kind and easily extended to spacetimes of any dimension $d\geq 2$.
Nonetheless, in this work we will focus on developing the 1+3 formalism for
timelike congruences in a 4-dimensional oriented Lorentzian manifold
with torsion.

\subsection{Structure equations
for the geometric fields
\label{Section:1p3_structure_equations}}

The kinematical quantities of a congruence of curves
(\ref{Separation_eq:kinematical quantities_general}), the acceleration
vector field (\ref{Separation_eq:acceleration_timelike}) and the
tensors found from the decomposition of the torsion tensor,
Eqs.~\eqref{SWT_eq:timelike_torsion_decomposition} and
\eqref{SWT_eq:timelike_torsion_decomposition_components_def}, and of
the Weyl tensor, Eqs.~\eqref{SWT_eq:Weyl_tensor_1+3_decomposition} and
\eqref{SWT_eq:Weyl_tensor_magnetic_bar}, and the Ricci tensor
completely describe the
geometry of the manifold $\left(\mathcal{M},g,S\right)$ and the
properties of a congruence of curves that permeate it. We have then to
find a complete set of differential equations that describe the
evolution of these quantities along $u$ and on $\mathcal{V}$.

Now, projecting twice Eq.~\eqref{Separation_eq:B_tensor_general} with
the projector given in
Eq.~\eqref{Projector_eq:orthogonal_projector_timelike}, we find that
the covariant derivative of the tangent vector field $u$ is given by
$\nabla_{\alpha}u_{\beta}=
B_{\perp\alpha\beta}-W_{\alpha\beta}-u_{\alpha}a_{\beta}$, where we
have used
Eqs.~\eqref{SWT_eq:timelike_torsion_decomposition},
\eqref{Separation_eq:B_tensor_general}, and
\eqref{Separation_eq:acceleration_timelike}.  Then, using
Eq.~\eqref{Separation_eq:B_orthl_esv_general} we find
\begin{equation}
\nabla_{\alpha}u_{\beta}=\frac{1}{3}
h_{\alpha\beta}\theta+\sigma_{\alpha\beta}+
\omega_{\alpha\beta}-W_{\alpha\beta}-
u_{\alpha}a_{\beta}\,.
\label{SE_eq:Covariant_derivative_timelike_tangent}
\end{equation}
Applying the Ricci identity,
Eq.~(\ref{Conventions_eq:Riemann_tensor_definition}),
to Eq.~\eqref{SE_eq:Covariant_derivative_timelike_tangent},
we find the
propagation equations for the kinematical quantities
\begin{equation}
\begin{aligned}
\dot{\theta}-\dot{W}_{\alpha}{}^{\alpha}=
& -R_{\alpha\beta}u^{\alpha}u^{\beta}-\left(\frac{1}{3}
\theta^{2}+\sigma_{\alpha\beta}\sigma^{\alpha\beta}+
\omega_{\alpha\beta}\omega^{\beta\alpha}\right)\\
 & +D_{\alpha}a^{\alpha}+a_{\alpha}a^{\alpha}+W^{\beta\alpha}
 \left[\frac{1}{3}h_{\alpha\beta}\theta+
 \sigma_{\alpha\beta}+\omega_{\alpha\beta}\right]+
 X_{\alpha}a^{\alpha}\,,
\end{aligned}
\label{SE_eq::Raychaudhuri_timelike}
\end{equation}
\begin{equation}
\begin{aligned}h_{\mu}{}^{\alpha}h_{\nu}{}^{\beta}
\left(\dot{\omega}_{\alpha\beta}-
\dot{W}_{\left[\alpha\beta\right]}\right)=
 &
\frac{1}{2}h_{\mu}{}^{\alpha}h_{\nu}{}^{\beta}
R_{\left[\alpha\beta\right]}-E_{\left[\mu\nu\right]}-\frac{2}{3}
\theta\omega_{\mu\nu}+2\sigma^{\alpha}{}_{[\mu}
\omega_{\nu]\alpha}\\
 &
+D_{\left[\mu\right.}a_{\left.\nu\right]}+
X_{\left[\mu\right.}a_{\left.\nu\right]}+\frac{1}{3}\theta
 W_{\left[\mu\nu\right]}-
 W^{\delta}{}_{[\mu}
 \left(\sigma_{\nu]\delta}+
 \omega_{\nu]\delta}\right)
 \,,
 \end{aligned}
\label{SE_eq:omega_dot_general}
\end{equation}
\begin{equation}
\begin{aligned}
h_{\mu}{}^{\alpha}h_{\nu}{}^{\beta}
\left(\dot{\sigma}_{\alpha\beta}-\dot{W}_{\left\langle
\alpha\beta\right\rangle }\right)= & \frac{1}{2}R_{\left\langle
\mu\nu\right\rangle }-E_{\left(\mu\nu\right)}+D_{\left\langle
\mu\right.}a_{\left.\nu\right\rangle }+a_{\left\langle
\mu\right.}a_{\left.\nu\right\rangle
}-\frac{2}{3}\sigma_{\mu\nu}\theta-
\sigma^{\delta}{}_{\left\langle
\mu\right.}\sigma_{\left.\nu\right\rangle\delta }
\\ &
-
\omega_{\delta\left\langle \mu\right.}
\omega_{\left.\nu\right\rangle }{}^{\delta}
+
X_{\left\langle \mu\right.}a_{\left.\nu\right\rangle}
+
W_{\delta\left\langle \mu\right.}\sigma_{\left.\nu\right\rangle }{}^{\delta}
+
W_{\delta\left\langle \mu\right.}\omega_{\left.\nu\right\rangle }{}^{\delta}
+
\frac{1}{3}W_{\left\langle \mu\nu\right\rangle }\theta\,,
\end{aligned}
\label{SE_eq:sigma_dot_general}
\end{equation}
where for any 2-tensor $Y_{\alpha\beta}$
we use the angular brackets to represent the projected
symmetric part without trace of it, i.e., 
$Y_{\left\langle \alpha\beta\right\rangle }
\equiv\left[h^{\mu}{}_{(\alpha}h_{\beta)}{}^{\nu}-
\frac{h_{\alpha\beta}}{3}h^{\mu\nu}\right]
Y_{\mu\nu}$, and
dummy indices are leftout of all the symmetrization processes. 
Equations
\eqref{SE_eq::Raychaudhuri_timelike}--\eqref{SE_eq:sigma_dot_general}
follow from computing the projection
$h_{\mu}{}^{\alpha}u^{\beta}h_{\nu}{}^{\gamma}
R_{\alpha\beta\gamma\delta}u^{\delta}$
and evaluate, respectively, its trace, antisymmetric part and symmetric
part without trace. 
Provided the field equations of a gravity theory to relate the projection
of the Ricci tensor with the stress-energy tensor,
Eq.~(\ref{SE_eq::Raychaudhuri_timelike})
represents the generalization of the Raychaudhuri equation for manifolds
with nonzero torsion~\citep{Luz_Vitagliano_2017,Dey_Liberati_Pranzetti_2017},
describing the evolution of the expansion of a congruence of curves.
From the Ricci identity,
Eq.~(\ref{Conventions_eq:Riemann_tensor_definition}),
we also find the constraint equations,
\begin{equation}
\begin{aligned}
\varepsilon^{\alpha\beta\gamma}D_{\alpha}
\left(\omega_{\beta\gamma}-W_{\beta\gamma}\right)
-
\varepsilon^{\alpha\beta\gamma}a_{\gamma}\omega_{\alpha\beta}=
&
H_{\gamma}{}^{\gamma}
-
2\bar{S}^{\alpha\beta}\left(\frac{1}{3}h_{\beta\alpha}\theta+\sigma_{\beta\alpha}+\omega_{\beta\alpha}-W_{\beta\alpha}\right)\\
&
-\varepsilon^{\alpha\beta\gamma}a_{\gamma}\left(S_{\alpha\beta}+W_{\alpha\beta}\right)\,,
\label{SE_eq:omega_divergence_general}
\end{aligned}
\end{equation}
\begin{equation}
\begin{aligned}
\varepsilon^{\alpha\beta\left\langle \mu\right.}
D_{\alpha}\left(\sigma_{\beta}{}^{\left.\nu\right\rangle }+\omega_{\beta}{}^{\left.\nu\right\rangle }-W_{\beta}{}^{\left.\nu\right\rangle }\right)
+\varepsilon^{\alpha\beta\left\langle \mu\right.}a^{\left.\nu\right\rangle }\omega_{\beta\alpha}=
&
H^{\left\langle \mu\nu\right\rangle }
-
\varepsilon^{\alpha\beta\left\langle \mu\right.}a^{\left.\nu\right\rangle }\left(S_{\alpha\beta}+W_{\alpha\beta}\right)+
2W_{\delta}{}^{\left\langle \mu\right.}\bar{S}^{\left.\nu\right\rangle \delta}\\
&
-2\left(\frac{1}{3}h_{\delta}{}^{\left\langle \mu\right.}\theta+\sigma_{\delta}{}^{\left\langle \mu\right.}+\omega_{\delta}{}^{\left\langle \mu\right.}\right)\bar{S}^{\left.\nu\right\rangle \delta}\,,
\label{SE_eq:sigma_omega_curl_general}
\end{aligned}
\end{equation}
\begin{equation}
\begin{aligned}
\frac{2}{3}D_{\mu}\theta
-
D_{\alpha}\left(\sigma_{\mu}{}^{\alpha}+\omega_{\mu}{}^{\alpha}-W_{\mu}{}^{\alpha}\right)
-
D_{\mu}W_{\gamma}{}^{\gamma}-2a^{\gamma}\omega_{\mu\gamma}=
&
-h^{\alpha}{}_{\mu}R_{\alpha\beta}u^{\beta}
-
2\varepsilon_{\alpha\beta\mu}\bar{S}^{\alpha\gamma}W_{\gamma}{}^{\beta}
\\
+2a^{\gamma}\left(S_{\gamma\mu}+W_{\left[\gamma\mu\right]}\right)
&
+2\varepsilon_{\alpha\beta\mu}\bar{S}^{\alpha\gamma}\left(\frac{1}{3}h_{\gamma}{}^{\beta}\theta+\sigma_{\gamma}{}^{\beta}+\omega_{\gamma}{}^{\beta}\right)\,.
\label{SE_eq:theta_gradient}
\end{aligned}
\end{equation}
where dummy indices do not participate in the symetrization
processes.
Equations \eqref{SE_eq:omega_divergence_general}--\eqref{SE_eq:theta_gradient}
follow from computing the projection
$\varepsilon^{\alpha\beta\lambda}h_{\rho}{}^{\gamma}
R_{\alpha\beta\gamma\delta}u^{\delta}$
and evaluate, respectively, its trace, symmetric part without trace
and antisymmetric part.  These equations clearly exemplify how the
presence of torsion modifies the geometry of the manifold and,
consequently, the change in the evolution of a congruence of timelike
curves. When comparing to the case of vanishing
torsion~\citep{Ehlers_1961,Ellis_vanElst_1999}, we see that, in the
presence of a general torsion tensor field, the magnetic part of the
Weyl tensor, $H$, is characterized by
Eqs.~(\ref{SE_eq:omega_divergence_general})
and (\ref{SE_eq:sigma_omega_curl_general}), in
particular, it also depends on the divergence of the vorticity vector
$\frac{1}{2}\varepsilon^{\gamma\mu\nu}\omega_{\mu\nu}$. Moreover, from
Eq.~(\ref{SE_eq:omega_divergence_general}), we conclude that the
presence of torsion acts as a cause for the rotation of the congruence.

The evolution and constraint equations for the components of the
Weyl tensor are found from the identity for the Weyl tensor given in
Eq.~(\ref{Conventions_eq:Div_Weyl_trumper}), or, equivalently, from the
second Bianchi identity,
Eq.~(\ref{Conventions_eq:second_Bianchi_identity}).  For the electric part
of the Weyl tensor we find the propagation equation 
\begin{align}
&-h_{\alpha\mu}h_{\beta\nu}\dot{E}^{\mu\nu}+
\varepsilon_{\mu\beta}{}^{\nu}\left(D_{\nu}\bar{H}^{\mu}{}_{\alpha}+
a_{\nu}\bar{H}^{\mu}{}_{\alpha}\right)+
\varepsilon^{\mu}{}_{\alpha\delta}a^{\delta}H_{\mu\beta}
+\left(\sigma_{\alpha\nu}+\omega_{\alpha\nu}-
W_{\alpha\nu}\right)E^{\nu}{}_{\beta}
\nonumber\\
&+E_{\alpha}{}^{\mu}
\left(2\sigma_{\mu\beta}+2\omega_{\mu\beta}-W_{\mu\beta}\right)
-E_{\alpha\beta}\left(\theta-W_{\mu}{}^{\mu}\right)-
h_{\alpha\beta}E^{\nu\mu}\left(\sigma_{\mu\nu}+\omega_{\mu\nu}-
\frac{1}{2}W_{\mu\nu}\right)=
\nonumber\\
&=\,\frac{1}{4}h_{\alpha\beta}R_{\nu\mu}\left(W^{\mu\nu}-
W_{\gamma}{}^{\gamma}h^{\mu\nu}-2
\varepsilon^{\mu}{}_{\delta\gamma}\bar{S}^{\gamma\delta}
u^{\nu}\right)
-\varepsilon^{\mu\nu}{}_{\beta}X_{\mu}\bar{H}_{\alpha\nu}-
2\bar{S}_{\beta}{}^{\mu}\bar{H}_{\alpha\mu}+
h_{\alpha\beta}\bar{S}_{\mu\nu}\bar{H}^{\mu\nu}
\nonumber\\
&-\frac{1}{2}W_{\alpha\nu}h_{\delta\beta}R^{\nu\delta}-
\frac{1}{6}W_{\alpha\beta}R+\frac{1}{2}W_{\alpha\beta}
h_{\nu\delta}R^{\nu\delta}
+\frac{1}{2}h_{\alpha\nu}R^{\nu\delta}
\left(W_{\mu}{}^{\mu}h_{\delta\beta}-W_{\delta\beta}\right)-
\frac{1}{12}h_{\alpha\beta}\dot{R}
\nonumber\\
&+\frac{1}{2}h_{\alpha\mu}h_{\beta\nu}
u_{\delta}\nabla^{\delta}R^{\mu\nu}-\frac{1}{2}
D_{\alpha}\left(u_{\delta}R^{\delta}{}_{\beta}\right)
+u_{\lambda}R^{\lambda\sigma}
\bar{S}_{\beta}{}^{\mu}\varepsilon_{\sigma\mu\alpha}-
\frac{1}{2}R_{\mu\nu}X_{\alpha}u^{\mu}h^{\nu}{}_{\beta}
\nonumber\\
&+\frac{1}{2}\left(\frac{1}{3}h_{\alpha\mu}\theta+
\sigma_{\alpha\mu}+\omega_{\alpha\mu}\right)
R^{\mu\nu}h_{\nu\beta}\,,
\label{SE_eq:E_dot_general}
\end{align}
and the constraint
equation
\begin{align}
&D_{\mu}E_{\alpha}{}^{\mu}+
\varepsilon^{\mu\gamma\delta}\bar{H}_{\mu\alpha}
\left(\omega_{\delta\gamma}-W_{\delta\gamma}-
\frac{1}{2}S_{\delta\gamma}\right)
+\left(\sigma_{\delta\nu}+\omega_{\delta\nu}-
\frac{1}{2}W_{\delta\nu}\right)
\varepsilon^{\nu\beta\mu}h_{\alpha\beta}H_{\mu}{}^{\delta}
=\frac{1}{2}R_{\beta\gamma}\bar{S}^{\mu\beta}
\varepsilon^{\gamma}{}_{\mu\alpha}
\nonumber\\
&-R_{\mu\gamma}
\bar{S}^{\mu\beta}\varepsilon^{\gamma}{}_{\beta\alpha}-
\frac{1}{12}RX_{\alpha}+\frac{1}{2}R
\varepsilon_{\mu\beta\alpha}\bar{S}^{\mu\beta}
-\frac{1}{4}R^{\gamma\beta}u_{\beta}\left(W_{\alpha\gamma}-
X_{\alpha}u_{\gamma}\right)+\frac{1}{2}h_{\nu\alpha}
R^{\nu\beta}a_{\beta}+\frac{1}{12}D_{\alpha}R
\nonumber\\
&-\frac{1}{2}E_{\alpha\nu}X^{\nu}+\frac{1}{2}
D_{\alpha}\left(R_{\mu\nu}u^{\mu}u^{\nu}\right)+
R_{\nu\beta}W_{\alpha}{}^{\left(\nu\right.}u^{\left.\beta\right)}
+2\bar{S}^{\nu\beta}\varepsilon_{\nu\beta\mu}
E_{\alpha}{}^{\mu}+\frac{1}{2}S_{\alpha\gamma}
u^{\beta}R_{\beta}{}^{\gamma}-
\varepsilon_{\alpha\beta\nu}\bar{S}_{\mu}{}^{\beta}E^{\mu\nu}
\nonumber\\
&-\frac{1}{3}\theta R_{\nu\beta}
u^{\left(\nu\right.}h^{\left.\beta\right)}{}_{\alpha}
-R_{\nu\beta}u^{\left(\nu\right.}
\left(\sigma^{\left.\beta\right)}{}_{\alpha}-
\omega^{\left.\beta\right)}{}_{\alpha}\right)
-\frac{1}{2}h_{\alpha}{}^{\delta}u^{\gamma}
\nabla_{\gamma}\left(R_{\mu\nu}h^{\mu}{}_{\delta}
u^{\nu}\right)+\frac{1}{2}R_{\mu\nu}u^{\mu}u^{\nu}a_{\alpha}
\nonumber\\
&+\frac{1}{2}h_{\delta\alpha}R^{\delta\mu}
\left(\varepsilon_{\mu\nu\beta}\bar{S}^{\beta\nu}
-\frac{1}{2}W_{\gamma}{}^{\gamma}u_{\mu}+\frac{1}{2}X_{\mu}\right)\,,
\label{SE_eq:E_curl_general}
\end{align}
where only the upper indices enter in the symmetrization process.
{Equation (\ref{SE_eq:E_dot_general}) is found from the
projection
$h_{\mu\gamma}h_{\nu\beta}u_{\delta}
\nabla_{\lambda}C^{\gamma\delta\beta\lambda}$ and
Eq.~(\ref{SE_eq:E_curl_general})  follows from
$h_{\delta\alpha}u_{\gamma}
u_{\beta}\nabla_{\lambda}C^{\gamma\delta\beta\lambda}$.
For the magnetic part
of the Weyl tensor we find the propagation equation 
\begin{align}
&\left(2a^{\mu}E^{\nu}{}_{\left(\alpha\right.}+
D^{\mu}E^{\nu}{}_{\left(\alpha\right.}\right)
\varepsilon_{\left.\beta\right)\nu\mu}-
h^{\mu}{}_{\left(\alpha\right.}
h_{\left.\beta\right)}{}^{\nu}\dot{H}_{\mu\nu}
+\left(\sigma_{\mu\left(\alpha\right.}+
\omega_{\mu\left(\alpha\right.}\right)
H_{\left.\beta\right)}{}^{\mu}-
\left(\frac{2}{3}H_{\alpha\beta}+
\frac{1}{3}\bar{H}_{\alpha\beta}\right)\theta
\nonumber\\
&+\left(\frac{1}{3}h_{\alpha\beta}\theta-
h_{\alpha\beta}W^{\mu}{}_{\mu}-
\sigma_{\alpha\beta}+
W_{\left(\alpha\beta\right)}\right)\bar{H}^{\nu}{}_{\nu}
+2\sigma_{\mu\left(\alpha\right.}
\bar{H}^{\mu}{}_{\left.\beta\right)}-
\left(\sigma_{\mu\nu}-W_{\mu\nu}\right)
h_{\alpha\beta}\bar{H}^{\mu\nu}
+W^{\mu}{}_{\mu}\bar{H}_{\alpha\beta}
\nonumber\\
&-\bar{H}^{\mu}{}_{\left(\alpha\right.}W_{\left.\beta\right)\mu}
-W_{\mu\left(\alpha\right.}\bar{H}^{\mu}{}_{\left.\beta\right)}
=\frac{1}{2}D^{\delta}
\left(\varepsilon_{\gamma\delta\left(\alpha\right.}
h_{\left.\beta\right)\mu}R^{\gamma\mu}\right)+
E_{\mu\left(\alpha\right.}\left(2\bar{S}_{\left.\beta\right)}{}^{\mu}-
\varepsilon_{\left.\beta\right)}{}^{\mu\nu}X_{\nu}\right)
\nonumber\\
&-\frac{1}{2}u_{\mu}R^{\gamma\mu}\varepsilon_{\gamma\nu\left(\alpha\right.}
\left(\frac{1}{3}h^{\nu}{}_{\left.\beta\right)}\theta+
\sigma^{\nu}{}_{\left.\beta\right)}+\omega^{\nu}{}_{\left.
\beta\right)}-W^{\nu}{}_{\left.\beta\right)}\right)
+\frac{1}{2}R^{\gamma\mu}u_{\gamma}
h_{\mu\left(\alpha\right.}
\varepsilon_{\left.\beta\right)}{}^{\nu\delta}
\left(\omega_{\nu\delta}-W_{\nu\delta}\right)+
\frac{1}{3}R\bar{S}_{\left(\alpha\beta\right)}
\nonumber\\
&-\bar{S}_{\left(\alpha\right.}{}^{\mu}h^{\gamma}{}_{\left.\beta\right)}
R_{\mu\gamma}
+\frac{1}{2}\varepsilon_{\mu\nu}{}_{\left(\alpha\right.}
W^{\mu}{}_{\left.\beta\right)}R^{\nu\delta}u_{\delta}
+\bar{S}_{\left(\alpha\beta\right)}R_{\mu\delta}
u^{\mu}u^{\delta}-\frac{1}{2}
\varepsilon^{\mu\nu}{}_{\left(\alpha\right.}h_{\left.\beta\right)}{}^{\gamma}
X_{\mu}R_{\nu\gamma}\,,
\label{SE_eq:H1_dot_general}
\end{align}
where dummy indices are out of the symmetrization process.  This
equation describes the propagation of the $H$ component of the Weyl
tensor along the congruence.
For the magnetic part
of the Weyl tensor we find the constraint equation 
\begin{align}
&-2D_{\mu}H^{\mu}{}_{\alpha}+\frac{2}{3}
\varepsilon_{\alpha\beta\delta}E^{\beta\delta}
\theta+2\varepsilon_{\alpha\beta}{}^{\mu}
E^{\beta\delta}\sigma_{\delta\mu}
+4E^{\beta}{}_{\left(\delta\right.}
\varepsilon_{\left.\alpha\right)\beta}{}^{\mu}
\left(\omega^{\delta}{}_{\mu}-W^{\delta}{}_{\mu}\right)=
\varepsilon_{\alpha\gamma\delta}
D^{\delta}\left(R^{\gamma\beta}u_{\beta}\right)
\nonumber\\
&+\varepsilon_{\alpha\gamma\delta}
R^{\gamma}{}_{\beta}W^{\delta\beta}-
\frac{1}{3}\varepsilon_{\alpha\gamma\delta}
R^{\gamma\delta}\theta
-\varepsilon_{\alpha\gamma\delta}R^{\gamma\beta}
\left(\sigma^{\delta}{}_{\beta}+
\omega^{\delta}{}_{\beta}\right)+2\bar{S}^{\mu}{}_{\alpha}
R_{\mu\delta}u^{\delta}
-2\bar{S}_{\alpha}{}^{\beta}R_{\beta\delta}u^{\delta}-
2\varepsilon^{\mu\nu\gamma}S_{\mu\nu}E_{\gamma\alpha}
\nonumber\\
&+\varepsilon^{\mu\nu\gamma}S_{\mu\nu}
R_{\gamma\beta}h^{\beta}{}_{\alpha}-
\varepsilon^{\mu\nu}{}_{\alpha}
S_{\mu\nu}u^{\gamma}u^{\beta}R_{\gamma\beta}
-\frac{1}{3}\varepsilon^{\mu\nu}{}_{\alpha}
S_{\mu\nu}R-4\bar{S}^{\gamma\beta}
\varepsilon_{\gamma\beta\mu}H^{\mu}{}_{\alpha}\,.
\label{SE_eq:H1_divergence_general}
\end{align}
This equation
provides the divergence of $H$ on $\mathcal{V}$.
Equation~(\ref{SE_eq:H1_dot_general}) is found from
$\varepsilon_{\gamma\delta\left(\alpha\right.}
h_{\left.\beta\right)\mu}\nabla_{\lambda}C^{\gamma\delta\mu\lambda}$ 
and
Eq.~(\ref{SE_eq:H1_divergence_general}) follows from computing the
projection
$\varepsilon_{\alpha\gamma\delta}u_{\beta}
\nabla_{\lambda}C^{\gamma\delta\beta\lambda}$.
Computing the contraction
$\varepsilon_{\alpha}{}^{\gamma\delta}h^{\mu\beta}u^{\nu}
\left(R_{\gamma\delta\mu\nu}-R_{\mu\nu\gamma\delta}\right)$ 
and using 
Eq.~\eqref{Conventions_eq:Riemann_exchange_pair_indices},
we find that there is
a further
relation, one between the tensors $H$ and $\bar{H}$,
\begin{align}
&H_{\alpha}{}^{\beta}-
\bar{H}_{\alpha}{}^{\beta}+
\varepsilon_{\alpha}{}^{\mu\beta}u^{\nu}
R_{\left[\nu\mu\right]}=
-\frac{1}{2}\varepsilon_{\alpha}{}^{\mu\nu}D^{\beta}
\left(W_{\nu\mu}+S_{\nu\mu}\right)+
\varepsilon_{\alpha\mu\nu}X^{\nu}
B_{\perp}^{\beta\mu}
-\varepsilon_{\alpha\mu\nu}X^{\nu}
\left(W^{\left[\beta\mu\right]}+S^{\beta\mu}\right)
\nonumber\\
&-\frac{1}{2}\varepsilon_{\alpha}{}^{\mu\nu}a^{\beta}
\left(W_{\nu\mu}+S_{\nu\mu}\right)
+2h_{\alpha}{}^{\gamma}h^{\mu\beta}
u^{\nu}\nabla_{\nu}\bar{S}_{\gamma\mu}+
2\bar{S}_{\alpha}{}^{\beta}\theta-
2h_{\alpha}{}^{\beta}\bar{S}^{\mu\nu}
\left(S_{\mu\nu}+W_{\left[\mu\nu\right]}\right)
\nonumber\\
&+\varepsilon_{\alpha\mu\nu}a^{\nu}
\left(W^{\left(\mu\beta\right)}+
S^{\mu\beta}\right)-
h_{\alpha}{}^{\beta}u^{\nu}
\nabla_{\nu}\bar{S}_{\mu}{}^{\mu}-
h_{\alpha}{}^{\beta}\bar{S}_{\mu}{}^{\mu}\theta
+\varepsilon_{\alpha\mu}{}^{\nu}
D_{\nu}\left(W^{\left(\beta\mu\right)}+
S^{\beta\mu}\right)+
\varepsilon_{\alpha}{}^{\mu\nu}X^{\beta}
\omega_{\mu\nu}
\nonumber\\
&-2\bar{S}^{\mu\beta}B_{\perp\mu\alpha}
-\frac{1}{2}\varepsilon_{\alpha}{}^{\mu\nu}
X^{\beta}\left(W_{\mu\nu}+S_{\mu\nu}\right)+
2h_{\alpha}{}^{\beta}\bar{S}^{\mu\nu}
B_{\perp\mu\nu}-
2\bar{S}_{\alpha}{}^{\mu}W_{\mu}{}^{\beta}\,.
\label{SE_eq:H1_H2_relation_general}
\end{align}
Note that
in Eq.~(\ref{SE_eq:H1_H2_relation_general}) the term with the Ricci
tensor on the left-hand side could be removed by taking the symmetric
part in the indices $\alpha$ and $\beta$, however, this will add more
terms and dense the notation on the right-hand side so, we opted to
write the result as is.
Note also that 
Eq.~(\ref{SE_eq:H1_H2_relation_general})
shows how the presence of torsion is responsible for the degeneracy
removal of the magnetic parts of the Weyl tensor. It is interesting to
note that the difference between the magnetic parts of the Weyl
tensor depends on the derivatives of the components of the torsion
tensor, on $\mathcal{V}$ and along $u$, making it clear that in
general both the value and the rate of change of the torsion field
affect the difference between the tensors $H$ and $\bar{H}$.
Moreover, since in Eq.~(\ref{SE_eq:H1_H2_relation_general}) we have an
algebraic relation for the difference of the components of the tensors
$H$ and $\bar{H}$,
Eqs.~\eqref{SE_eq:H1_dot_general}--\eqref{SE_eq:H1_H2_relation_general}
characterize both $H$ and
$\bar{H}$, that is, we do not need to find propagation and constraint
equations for $\bar{H}$ since those will not be independent of
Eqs.~\eqref{SE_eq:H1_dot_general}--\eqref{SE_eq:H1_H2_relation_general}.
Using 
Eq.~(\ref{Conventions_eq:first_Bianchi_identity}),
we find
the remaining equations that characterize the torsion
tensor components. These equations
are
\begin{equation}
\begin{aligned}
h^{\alpha\nu}u^{\mu}R_{\left[\nu\mu\right]}
&=
\varepsilon^{\alpha}{}_{\mu\nu}u^{\gamma}\nabla_{\gamma}\bar{S}^{\mu\nu}
+\left(\varepsilon_{\gamma\mu\nu}\bar{S}^{\mu\nu}-\frac{1}{2}X_{\gamma}\right)B_{\perp}^{\alpha\gamma}
+\varepsilon_{\gamma}{}^{\alpha\mu}\bar{S}_{\mu\beta}\left(B_{\perp}^{\beta\gamma}-W^{\beta\gamma}\right)+\frac{1}{2}D^{\alpha}W_{\mu}{}^{\mu}\\
&
-\frac{1}{2}D_{\beta}W^{\alpha\beta}+\frac{1}{2}W_{\mu}{}^{\mu}a^{\alpha}-\frac{1}{2}W^{\alpha\gamma}a_{\gamma}+\frac{1}{2}X^{\alpha}\theta+S_{\gamma}{}^{\alpha}a^{\gamma}\,,
\label{SE_eq:W_Sbar_dot_general}
\end{aligned}
\end{equation}
and
\begin{equation}
\begin{aligned}
\varepsilon^{\alpha\mu\nu}R_{\mu\nu}
&=
\varepsilon^{\alpha}{}_{\sigma\rho}\dot{S}^{\sigma\rho}
+2D_{\beta}\bar{S}^{\beta\alpha}
+\varepsilon^{\alpha}{}_{\sigma\rho}D^{\sigma}X^{\rho}
-\varepsilon^{\alpha}{}_{\sigma\rho}W_{\mu}{}^{\mu}\left(\omega^{\sigma\rho}-W^{\sigma\rho}\right)
+2\bar{S}^{\alpha\mu}\left(X_{\mu}+a_{\mu}\right)\\
&
-\varepsilon^{\alpha}{}_{\sigma\rho}W^{\rho}{}_{\beta}\left(B_{\perp}^{\beta\sigma}-W^{\beta\sigma}\right)
-\varepsilon^{\alpha}{}_{\sigma\rho}\left(X^{\sigma}a^{\rho}+S^{\rho\sigma}\theta\right)
-4\varepsilon^{\beta\mu\nu}\bar{S}^{\alpha}{}_{\beta}\bar{S}_{\mu\nu}\,.
\label{SE_eq:S_dot_X_Sbar_curl}
\end{aligned}
\end{equation}
Equations (\ref{SE_eq:W_Sbar_dot_general}) and
(\ref{SE_eq:S_dot_X_Sbar_curl}) are derived from
computing the projections
$h^{\sigma\alpha}u^{\gamma}R_{\left[\alpha\beta\gamma\right]}{}^{\beta}$
and $h^{\sigma\alpha}h^{\rho\gamma}
R_{\left[\alpha\beta\gamma\right]}{}^{\beta}$, respectively,
and using 
the first Bianchi
identity,
Eq.~\eqref{Conventions_eq:first_Bianchi_identity}.

Equations
\eqref{SE_eq::Raychaudhuri_timelike}--\eqref{SE_eq:S_dot_X_Sbar_curl}
characterize the geometry of the manifold, containing exactly the
same information as the Ricci and Bianchi identities.

\section{The stress-energy-momentum tensor and the structure equations
for the matter fields
\label{Section:EC_stressenergytensor}}

\subsection{The stress-energy tensor and its decomposition}

For the stress-energy tensor $T$, that characterizes the matter fields
permeating the spacetime manifold, we also want to apply the 1+3
formalism in order to study its dynamical evolution. Setting the
congruence's tangent vector field $u$ to coincide with the 4-velocity
of an observer, without imposing any symmetries on ${T}$ and using
Eq.~(\ref{Projector_eq:orthogonal_projector_timelike}) we find
the following decomposition
\begin{equation}
{T}_{\alpha\beta}=\mu\,u_{\alpha}u_{\beta}+p\,h_{\alpha\beta}+
q_{1\alpha}u_{\beta}+u_{\alpha}q_{2\beta}+\pi_{\alpha\beta}+
\varepsilon_{\alpha\beta}{}^{\gamma}m_{\gamma}\,
,\label{SWT_eq:Stress_energy_tensor_1p3_decomposition}
\end{equation}
with
\begin{equation}
\begin{aligned}
\mu&=u^{\mu}u^{\nu} T_{\mu\nu}\,,&p&=\frac{1}{3}h^{\mu\nu} T_{\mu\nu}\,,&q_{1\alpha}&=-h_{\alpha}{}^{\mu}u^{\nu} T_{\mu\nu}\,,\\q_{2\alpha}&=-u^{\mu}h_{\alpha}{}^{\nu} T_{\mu\nu}\,,&\pi_{\alpha\beta}&= T_{\left\langle \alpha\beta\right\rangle }\,,&m_{\alpha}&=\frac{1}{2}\varepsilon_{\alpha}{}^{\mu\nu} T_{\mu\nu}\,.
\label{SWT_eq:Stress_energy_tensor_1p3_decomposition_quantities}
\end{aligned}
\end{equation}
where $\mu$ is the energy density measured by the chosen observer,
$p$ is the pressure, $q_{1\alpha}$  and 
$q_{2\alpha}$ represent energy and momentum density fluxes,
$\pi_{\alpha\beta}$ is the anysotropic stress and $m_{\alpha} $ is a
flux, in particular related with the nonconservation of intrinsic
angular momentum of matter.

Note that we are free to arbitrarily choose the time-like congruence,
nonetheless, in the case of a single fluid, it is useful to set the
congruence's tangent vector field $u$ to coincide with the 4-velocity
of the elements of volume of the fluid, in which case the various
projections of the stress-energy tensor and the kinematical quantities
of the congruence directly represent the properties and evolution of
the matter fluid.

\subsection{The structure equations for the matter fields}

To find the set of equations describing the dynamical evolution of the
matter fields in the manifold, we will consider the general
conservation law for the stress-energy tensor, given in general
by
\begin{equation}
\nabla_{\beta}{T}^{\alpha\beta}=\Psi^{\alpha}\,,
\label{SE_eq:Div_stress_energy_general}
\end{equation}
where $\Psi^{\alpha}$ is some tensor to be determined by the field
equations and the Bianchi identities. From
Eqs.~\eqref{SWT_eq:Stress_energy_tensor_1p3_decomposition} and
\eqref{SE_eq:Div_stress_energy_general}, the projections along $u$ and
on $\mathcal{V}$ are
\begin{align}
\dot{\mu}+\left(\theta-W_{\alpha}{}^{\alpha}\right)
\left(\mu+p\right)-\varepsilon^{\alpha\beta\gamma}
m_{\gamma}\left(\omega_{\alpha\beta}-
W_{\alpha\beta}\right)
+\pi^{\alpha\beta}\left(\sigma_{\alpha\beta}-
W_{\alpha\beta}\right)+
\left(q_{1}^{\alpha}+q_{2}^{\alpha}\right)a_{\alpha}+
D_{\alpha}q_{2}^{\alpha}
=-u_{\alpha}\Psi^{\alpha}\,,
\label{SE_eq:energy_conservation_stress_energy_general}
\end{align}
\begin{equation}
\begin{aligned}
\left(\mu+p\right)a_{\alpha}+D_{\alpha}p+D_{\mu}\pi_{\alpha}{}^{\mu}+\varepsilon_{\alpha}{}^{\mu\nu}D_{\mu}m_{\nu}+\left(\pi_{\alpha\nu}-\varepsilon_{\alpha\mu\nu}m^{\mu}\right)a^{\nu}+h_{\alpha}{}^{\beta}\dot{q}_{1\beta}+\left(q_{1\alpha}+\frac{1}{3}q_{2\alpha}\right)\theta&\\-q_{1\alpha}W_{\beta}{}^{\beta}+q_{2}^{\beta}\left(\sigma_{\beta\alpha}+\omega_{\beta\alpha}-W_{\beta\alpha}\right)&=h_{\alpha\beta}\Psi^{\beta}\,.
\label{SE_eq:momentum_conservation_stress_energy_general}
\end{aligned}
\end{equation}
At this point the imposition that the evolution equations
for the matter variables are determined by
\eqref{SE_eq:Div_stress_energy_general} is given ad hoc. In
practice, however, provided the field equations of a gravity theory
relating the Ricci and the stress-energy tensors, the conservation
equations will follow from the second Bianchi identity.  Hence, these
are a pivotal component to guarantee the consistency of the physical
theory and system of equations.

\section{The Einstein-Cartan theory for a Weyssenhoff like
torsion: Field equations
\label{Section:EC_structure_equations}}

The general set of structure equations that arise from the 1+3
formalism can be used to study solutions of any relativistic theory of
gravitation based on an affine, metric compatible connection. In this
section, we will focus on the Einstein-Cartan theory characterized by
the field equations
\begin{align}
R_{\alpha\beta}-\frac{1}{2}g_{\alpha\beta}R  +
\Lambda g_{\alpha\beta} & =\hskip0.3cm 8\pi{T}_{\alpha\beta}\,,
\label{EC_eq:field_equations_Ricci}\\
S^{\alpha\beta\gamma}+
2g^{\gamma[\alpha}S^{\beta]}{}_{\mu}{}^{\mu} & =
-8\pi\Delta^{\alpha\beta\gamma}\,,
\label{EC_eq:field_equations_torsion}
\end{align}
where ${T}_{\alpha\beta}$ represents the canonical stress-energy
tensor, $\Delta^{\alpha\beta\mu}$ is the intrinsic hypermomentum and
$\Lambda$ the cosmological constant
The  Einstein-Cartan theory defined by Eqs.~\eqref{EC_eq:field_equations_Ricci}
and Eq.~\eqref{EC_eq:field_equations_torsion}
can be derived from the Einstein-Hilbert
action
$I=\frac{1}{16\pi}\int{ d^4x\sqrt{-g}\left(R-2\Lambda\right)}
+\int{ d^4x\sqrt{-g}{\cal L}_{\rm m}}$,
where
the Ricci scalar contains the metric and the torsion as dynamical
variables, ${\cal L}_{\rm m}$ is the matter Lagrangian density, 
and the variation of $I$ must be performed with respect to those
two fields.
The conservation law is given by
\begin{equation}
\nabla_{\beta}{T}_{\alpha}{}^{\beta}=
2S_{\alpha\mu\nu}{T}^{\nu\mu}-\frac{1}{4\pi}
S_{\alpha\mu}{}^{\mu}\Lambda+\frac{1}{8\pi}
\left(S_{\alpha\mu}{}^{\mu}R-S^{\mu\nu\sigma}
R_{\alpha\sigma\mu\nu}\right)\,.
\label{EC_eq:Conservation_laws}
\end{equation}

To simplify the equations and, in agreement with what we are going to
consider in the following, we will impose that the torsion tensor
is characterized only by the tensor $S_{\alpha\beta}$, that is, the
tensors $\bar{S}_{\alpha\beta}$, $W_{\alpha\beta}$ and $X_{\alpha}$ in
Eq.~(\ref{SWT_eq:timelike_torsion_decomposition}) are considered to be
identically zero, so
\begin{equation}
S_{\alpha\beta}{}^{\gamma}=S_{\alpha\beta}u^{\gamma}\,.
\label{EC_eq:simplifiedtorsion}
\end{equation}
Given
Eqs.~\eqref{EC_eq:field_equations_Ricci}--\eqref{EC_eq:Conservation_laws}
and assuming Eq.~\eqref{EC_eq:simplifiedtorsion},
the 1+3 structure equations have the following new forms.

The propagation equations for the kinematical quantities associated
with $u$ are
\begin{align}
\dot{\theta}=  -4\pi\left(\mu+3p\right)+
\Lambda-\left(\frac{1}{3}\theta^{2}+\sigma_{\alpha\beta}
\sigma^{\alpha\beta}+\omega_{\alpha\beta}\omega^{\beta\alpha}\right)+
D_{\alpha}a^{\alpha}+a_{\mu}a^{\mu}\,
\label{EC_eq:Raychaudhuri_general}
\end{align}
\begin{align}
h_{\mu}{}^{\alpha}h_{\nu}{}^{\beta}\dot{\omega}_{\alpha\beta}=
 -E_{\left[\mu\nu\right]}+4\pi\varepsilon_{\mu\nu\gamma}m^{\gamma}-
\frac{2}{3}\theta\omega_{\mu\nu}
+2\sigma^{\alpha}{}_{\left[\mu\right.}
\omega_{\left.\nu\right]\alpha}+D_{\left[\mu\right.}
a_{\left.\nu\right]}\,,
\label{EC_eq:omega_dot_general}
\end{align}
\begin{align}
h_{\mu}{}^{\alpha}h_{\nu}{}^{\beta}\dot{\sigma}_{\alpha\beta}=
 -E_{\left(\mu\nu\right)}+4\pi\left(\pi_{\mu\nu}\right)+
D_{\left\langle \mu\right.}a_{\left.\nu\right\rangle }+
a_{\left\langle \mu\right.}a_{\left.\nu\right\rangle }-
\frac{2}{3}\sigma_{\mu\nu}\theta
-\sigma^{\delta}{}_{\left\langle \mu\right.}
\sigma_{\left.\nu\right\rangle \delta}
-\omega^{\delta}{}_{\left\langle \mu\right.}
\omega_{\left.\nu\right\rangle \delta}\,,
\label{EC_eq:sigma_dot_general}
\end{align}
and the corresponding constraint equations are
\begin{align}
\varepsilon^{\mu\nu\rho}
D_{\mu}\omega_{\nu\rho}+\varepsilon^{\mu\nu\rho}
a_{\rho}\omega_{\nu\mu} & =H_{\rho}{}^{\rho}+
\varepsilon^{\mu\nu\rho}a_{\rho}S_{\nu\mu}\,,
\label{EC_eq:EC_omega_divergence_general}
\end{align}
\begin{align}
\varepsilon^{\alpha\beta\left\langle
\mu\right.}D_{\alpha}\left(\sigma_{\beta}{}^{\left.\nu\right\rangle }+
\omega_{\beta}{}^{\left.\nu\right\rangle }\right)+
\varepsilon^{\alpha\beta\left\langle \mu\right.}
a^{\left.\nu\right\rangle }\omega_{\beta\alpha}  =
H^{\left\langle \mu\nu\right\rangle }-
\varepsilon^{\alpha\beta\left\langle
\mu\right.}a^{\left.\nu\right\rangle }
S_{\alpha\beta}\,,
\label{EC_eq:sigma_omega_curl_general}
\end{align}
\begin{align}
\frac{2}{3}D_{\alpha}\theta-D_{\mu}
\left(\sigma_{\alpha}{}^{\mu}+
\omega_{\alpha}{}^{\mu}\right)-2a^{\mu}\omega_{\alpha\mu}
=8\pi q_{1\alpha}+2a^{\mu}S_{\mu\alpha}\,.
\label{EC_eq:theta_gradient_general}
\end{align}
where only upper indices enter the symetrization
process.

The propagation equations for the electric and magnetic parts of the
Weyl tensor are
\begin{align}
&-h_{\alpha\mu}h_{\beta\nu}\dot{E}^{\mu\nu}+
\varepsilon_{\mu\beta}{}^{\nu}\left(D_{\nu}\bar{H}^{\mu}{}_{\alpha}
+a_{\nu}\bar{H}^{\mu}{}_{\alpha}\right)+
\varepsilon^{\mu}{}_{\alpha\delta}a^{\delta}H_{\mu\beta}
+\left(\sigma_{\alpha\nu}+\omega_{\alpha\nu}\right)E^{\nu}{}_{\beta}+
2E_{\alpha}{}^{\mu}\left(\sigma_{\mu\beta}+\omega_{\mu\beta}\right)
\nonumber
\\
&-E_{\alpha\beta}\theta-h_{\alpha\beta}E^{\nu\mu}
\left(\sigma_{\mu\nu}+\omega_{\mu\nu}\right)=
\frac{4\pi}{3}h_{\alpha\beta}\dot{\mu}+4\pi
h_{\alpha\mu}h_{\beta\nu}\dot{\pi}^{\mu\nu}+
4\pi\varepsilon_{\alpha\beta}{}^{\gamma}\dot{m}_{\gamma}+
4\pi\left(q_{1\alpha}a_{\beta}+a_{\alpha}q_{2\beta}\right)
\nonumber
\\
&+4\pi
D_{\alpha}q_{2\beta}
+4\pi\left(\frac{1}{3}h_{\alpha\delta}\theta+
\sigma_{\alpha\delta}+\omega_{\alpha\delta}\right)
\left[h^{\delta}{}_{\beta}\left(\mu+p\right)+
\pi^{\delta}{}_{\beta}+
\varepsilon^{\delta}{}_{\beta\gamma}m^{\gamma}\right]\,,
\label{EC_eq:E_dot_general} 
\end{align}
\begin{align}
&\bar{H}_{\mu}{}^{\mu}\left(\frac{1}{3}
h^{\alpha\beta}\theta-\sigma^{\alpha\beta}\right)+
2\bar{H}_{\mu}{}^{\left(\alpha\right.}
\sigma^{\left.\beta\right)\mu}-h^{\alpha\beta}
\bar{H}^{\mu\nu}\sigma_{\mu\nu}
+\left[2a_{\mu}E_{\nu}{}^{\left(\alpha\right.}+
D_{\mu}E_{\nu}{}^{\left(\alpha\right.}\right]
\varepsilon^{\left.\beta\right)\nu\mu}-
h^{\mu\left(\alpha\right.}h^{\left.\beta\right)\nu}\dot{H}_{\mu\nu}
\nonumber
\\
&+H^{\mu\left(\alpha\right.}
\left(\sigma_{\mu}{}^{\left.\beta\right)}+
\omega_{\mu}{}^{\left.\beta\right)}\right)-
\frac{1}{3}\left(2H^{\alpha\beta}+\bar{H}^{\alpha\beta}\right)\theta=
4\pi\varepsilon_{\gamma}{}^{\delta
\left(\alpha\right.}h^{\left.\beta\right)\mu}
q_{1}^{\gamma}\left(\sigma_{\delta\mu}+\omega_{\delta\mu}\right)+
4\pi\varepsilon^{\mu\delta(\alpha}q_{2}^{\beta)}\omega_{\delta\mu}
\nonumber
\\
&+4\pi\varepsilon_{\gamma}{}^{\delta\left(\alpha\right.}
D_{\delta}\pi^{\beta)\gamma}+4\pi
D^{(\alpha}m^{\beta)}-4\pi h^{\alpha\beta}D_{\delta}m^{\delta}\,,
\label{EC_eq:H1_dot_general}
\end{align}
and the corresponding constraint equations are,
\begin{align}
&D_{\beta}E_{\alpha}{}^{\beta}+
\varepsilon^{\beta\gamma\delta}\bar{H}_{\beta\alpha}
\left(\omega_{\delta\gamma}-\frac{1}{2}S_{\delta\gamma}\right)
+\left(\sigma_{\delta\nu}+\omega_{\delta\nu}\right)
\varepsilon^{\nu\beta\gamma}h_{\alpha\beta}H_{\gamma}{}^{\delta}=
4\pi
D_{\alpha}p+
\frac{8\pi}{3}D_{\alpha}\mu
\nonumber
\\
&+4\pi
\left[\pi_{\alpha}{}^{\beta}-
\varepsilon^{\beta}{}_{\alpha\gamma}m^{\gamma}\right]a_{\beta}
+4\pi\left(q_{2\lambda}+
q_{1\lambda}\right)
\left(\sigma_{\alpha}{}^{\lambda}+
\omega_{\alpha}{}^{\lambda}+\frac{1}{3}
h_{\alpha}{}^{\lambda}\theta\right)
+4\pi
h_{\alpha}{}^{\gamma}\dot{q}_{1\gamma}
\nonumber
\\
&+4\pi\left(\mu+p\right)a_{\alpha}-4\pi
S_{\alpha\gamma}q_{2}^{\gamma}\,,
\label{EC_eq:E_curl_general}
\end{align}
\begin{align}
&4
{E^{\beta}}_{\left(\delta\right.}
\varepsilon_{\left.\alpha\right)\beta}{}^{\gamma}
{\omega^{\delta}}_{\gamma}
+
2\varepsilon_{\alpha\beta}{}^{\gamma}
E^{\beta\delta}\sigma_{\delta\gamma}
-2D_{\gamma}H^{\gamma}{}_{\alpha}+\frac{2}{3}
\varepsilon_{\alpha\beta\delta}E^{\beta\delta}\theta=
 -8\pi\varepsilon_{\alpha\gamma\delta}\left[D^{\delta}
q_{1}^{\gamma}+\omega^{\delta\gamma}\left(\mu+p\right)\right]
\nonumber
\\
&-8\pi\varepsilon_{\alpha\gamma\delta
 }\left(\pi^{\gamma\beta}+
 \varepsilon^{\gamma\beta}{}_{\nu}m^{\nu}\right)
 \left(\sigma^{\delta}{}_{\beta}+\omega^{\delta}{}_{\beta}\right)
 -\frac{16\pi}{3}\theta m_{\alpha}-\frac{8\pi}{3}
 \varepsilon^{\mu\nu}{}_{\alpha}S_{\mu\nu}
 \left(\mu+3p-\frac{\Lambda}{4\pi}\right)
\nonumber
\\
 &+8\pi\varepsilon^{\mu\nu\gamma}S_{\mu\nu}
 \left(\pi_{\gamma\alpha}+
 \varepsilon_{\gamma\alpha\nu}m^{\nu}\right)-
 2\varepsilon^{\sigma\nu\gamma}S_{\sigma\nu}E_{\gamma\alpha}\,,
\label{EC_eq:H1_divergence_general}
\end{align}
\begin{align}
H_{\alpha}{}^{\beta}-
\bar{H}_{\alpha}{}^{\beta}+4\pi
\varepsilon_{\alpha}{}^{\mu\beta}\left(q_{1\mu}-q_{2\mu}\right)=
 -\frac{1}{2}\varepsilon_{\alpha}{}^{\mu\nu}D^{\beta}S_{\nu\mu}-
\frac{1}{2}\varepsilon_{\alpha}{}^{\mu\nu}a^{\beta}S_{\nu\mu}
  +\varepsilon_{\alpha\mu\nu}
 a^{\nu}S^{\mu\beta}+
 \varepsilon_{\alpha\mu}{}^{\nu}D_{\nu}S^{\beta\mu}\,.
\label{EC_eq:H1_H2_relation_general}
\end{align}

The equations that characterize the torsion tensor are,
\begin{equation}
4\pi\left(q_{2}^{\alpha}-q_{1}^{\alpha}\right)=
S_{\gamma}{}^{\alpha}a^{\gamma}\,,
\label{EC_eq:heat_flow_general}
\end{equation}
\begin{equation}
16\pi m^{\alpha}= \varepsilon^{\alpha}{}_{\rho\sigma}
\left(S^{\rho\sigma}\theta+\dot{S}^{\rho\sigma}\right)\,.
\label{EC_eq:spin_conservation_general}
\end{equation}

The equation relating the torsion to the
hypermomentum is 
\begin{align}
S^{\alpha\beta}u^{\gamma}
= -8\pi\Delta^{\alpha\beta\gamma}\,.
\label{EC_eq:torsionhyperm}
\end{align}

The conservation of energy and momentum equations are
\begin{align}
\dot{\mu}+\theta\left(\mu+p\right)+
2q_{1}^{\alpha}a_{\alpha}+
D_{\alpha}q_{2}^{\alpha}+\pi^{\alpha\beta}
\sigma_{\alpha\beta}+\varepsilon^{\alpha\beta\gamma}
m_{\gamma}\omega_{\beta\alpha}=0\,,
\label{EC_eq:energy_conservation_stress_energy_general}
\end{align}
\begin{equation}
\begin{aligned}
\left(\mu+p\right)a_{\alpha}
+D_{\alpha}p+h_{\alpha}{}^{\beta}\dot{q}_{1\beta}
+D_{\mu}\pi_{\alpha}{}^{\mu}
+\left(\pi_{\alpha\nu}-\varepsilon_{\alpha\mu\nu}m^{\mu}\right)a^{\nu}
&\\+\left(q_{1\alpha}+\frac{q_{2\alpha}}{3}\right)\theta+q_{2}^{\beta}\left(\sigma_{\beta\alpha}+\omega_{\beta\alpha}\right)+\varepsilon_{\alpha}{}^{\mu\nu}D_{\mu}m_{\nu}
&
=-\frac{1}{8\pi}\bar{H}_{\alpha}{}^{\rho}S^{\gamma\delta}\varepsilon_{\rho\gamma\delta}-S_{\alpha}{}^{\beta}q_{2\beta}\,.
\label{EC_eq:momentum_conservation_stress_energy_general}
\end{aligned}
\end{equation}
Once the matter model is given,
Eqs.~\eqref{EC_eq:Raychaudhuri_general}--\eqref{EC_eq:momentum_conservation_stress_energy_general} completely
describe the geometry of the spacetime and the evolution of the matter
fluid for the Einstein-Cartan theory, for a torsion tensor of the
form given in Eq.~\eqref{EC_eq:simplifiedtorsion}, i.e.,
$S_{\alpha\beta}{}^{\gamma}=S_{\alpha\beta}u^{\gamma}$.  Note,
however, that we have not yet imposed any restriction on the
stress-energy tensor, and for a torsion that assumes the form of
Eq.~\eqref{EC_eq:simplifiedtorsion}, the field equations,
Eqs.~\eqref{EC_eq:Raychaudhuri_general}--\eqref{EC_eq:momentum_conservation_stress_energy_general}, are valid
for any matter model.

The form of the field
equations,
Eqs.~\eqref{EC_eq:Raychaudhuri_general}--\eqref{EC_eq:momentum_conservation_stress_energy_general}
allow us to compare them
with the results in the literature and test their validity.
First, we see that our results differ from the ones
in~\citep{Brechet_Hobson_Lasenby_(2007)}. In this reference the
authors seem to have not
realized that in the presence of torsion, the Weyl
tensor is characterized by three tensors,
more specifically, the magnetic part
of the Weyl tensor is described by two distinct tensors; moreover,
it is quite surprising that the authors did not verify that the electric
and magnetic parts of the Weyl tensor do not carry all the usual
symmetries found in spacetimes with vanishing torsion.
Second, setting
the torsion terms in
Eqs.~\eqref{EC_eq:Raychaudhuri_general}--\eqref{EC_eq:momentum_conservation_stress_energy_general} 
to zero, $\bar{H}=H$ and both the electric and magnetic part
of the Weyl tensor are symmetric, tracefree tensors,
and imposing the stress-energy tensor to be symmetric,
such that $m^{\alpha}=0$
and $q_{1\alpha}=q_{2\alpha}$, we recover the expressions for the
structure equations for the theory of general
relativity~\citep{Ehlers_1961,Ellis_vanElst_1999}.

\section{Relativistic cosmology in Einstein-Cartan theory:
The isotropic universe and the geometry of the 3-spaces}
\label{Section:Isotropic_universe}

\subsection{Field equations for the universe with
homogeneous spinning fluid}

The general set of structure equations for the Einstein-Cartan theory,
Eqs.~\eqref{EC_eq:Raychaudhuri_general}--\eqref{EC_eq:momentum_conservation_stress_energy_general},
even for a simplified torsion tensor, is extremely complicated and
to find nontrivial solutions we have to impose some idealized symmetries
and constraints on the matter fields.
As a particular application of the previous set of equations,
we will consider the effects of a neutral Weyssenhoff fluid, see, e.g.,
\cite{Ray_Smalley_1983}, in a cosmological setting.

The Weyssenhoff fluid represents a semi-classical model for a perfect
fluid composed by fermions, taking into account the macroscopic effects
of the intrinsic angular momentum of its constituents.
Following,
Refs.~\citep{Ray_Smalley_1983,Obukhov_Korotky_1987}, for a comoving observer,
the canonical
stress-energy tensor of a Weyssenhoff fluid is such that
${T}={T}\left(\mu,p,q_{1}\right)$,
that is, the canonical stress-energy tensor only depends on the energy
density, pressure and an heat flow term that arises from the intrinsic
spin of the particles.
For the Weyssenhoff fluid the intrinsic
hypermomentum can be written as
$
\Delta^{\alpha\beta\gamma}=-\frac{1}{8\pi}\Delta^{\alpha\beta}u^{\gamma}\,,
$
where $u$ represents the proper 4-velocity of an element of volume
of the fluid and the antisymmetric spin density tensor,
$\Delta^{\alpha\beta}$,
verifies $\Delta^{\alpha\beta}u_{\beta}=0$. From
the field equation~(\ref{EC_eq:torsionhyperm}),
we find that the torsion tensor is given by
$S^{\alpha\beta}=\Delta^{\alpha\beta}$, 
and the components $\bar{S}^{\alpha\beta}$, $W^{\alpha\beta}$ and $X^{\alpha}$,
are identically zero
for the Weyssenhoff fluid.
An interesting consequence
for the Weyssenhoff model is given by Eq.~\eqref{EC_eq:heat_flow_general},
which simplifies to
$q_{1}^{\alpha}
=-\frac{1}{4\pi}S_{\gamma}{}^{\alpha}a^{\gamma}$.
This relation between the
vector field $q_{1}$ and the torsion tensor was already found by
Obukhov and Korotky for the Weyssenhoff fluid stress-energy
tensor~\citep{Obukhov_Korotky_1987}.
Of course the model found in~\citep{Obukhov_Korotky_1987} is more
general, since it is independent of the considered gravitational theory,
 showing, nonetheless, the consistency of the results.

We are interested in studying solutions where
a neutral Weyssenhoff
fluid acts as a source
of spin and that could be used to model the universe at very large scales,
such that the cosmological principle is verified by the matter fluid.
So, for the cosmological model we further assume a number of conditions.
(i) The shear tensor field of the fluid is
identically zero at every point and throughout the fluid's evolution,
hence $\sigma_{\alpha\beta}=0$.
(ii) There are no spatial expansion gradients, such that
$D_{\alpha}\theta=0$.
(iii) The matter fluid has no intrinsic preferred spatial directions,
therefore we impose that there are no spatial energy
 density and pressure gradients, namely, $D_{\alpha}\mu=0$
 and $D_{\alpha}p=0$.
(iv) The fluid's elements of volume have zero 4-acceleration
at all points and throughout the fluid's evolution, $a^{\mu}=0$.
(v) The vorticity
tensor is such that $\omega_{\alpha\beta}=S_{\alpha\beta}$. This
constraint is equivalent to impose that the spatial spaces, orthogonal
to the curves of the congruence, are hypersurfaces~\citep{Luz_Mena_2020}.
(vi) The orthogonal spatial hypersurfaces are complete and simply-connected.

As we will see, these conditions guarantee that at the level of the
metric there are no preferred spacial directions. On the other hand,
an observer comoving with the fluid that interacts directly with
the torsion tensor will in fact measure a preferred spacial direction,
however this does not imply an intrinsic anisotropy of the matter fluid.
We will discuss
this in more detail 
below.

In what follows, it is useful to define the vector fields
\begin{align}
\omega^{\gamma} 
=\frac{1}{2}\varepsilon^{\gamma\mu\nu}\omega_{\mu\nu}\,, 
\qquad
{S}^{\gamma} 
=\frac{1}{2}\varepsilon^{\gamma\mu\nu}S_{\mu\nu}\,,
\qquad
{\Delta}^{\gamma} 
=\frac{1}{2}\varepsilon^{\gamma\mu\nu}\Delta_{\mu\nu}\,,
\qquad{E}^{\gamma}  =\frac{1}{2}\varepsilon^{\gamma\mu\nu}E_{\mu\nu}\,,
\end{align}
such that $\omega_{\mu\nu}=\varepsilon_{\mu\nu\gamma}\omega^{\gamma}$,
$\Delta_{\mu\nu}=\varepsilon_{\mu\nu\gamma}{\Delta}^{\gamma}$,
$S_{\mu\nu}=\varepsilon_{\mu\nu\gamma}{S}^{\gamma}$, and
$E_{\left[\mu\nu\right]}=\varepsilon_{\mu\nu\gamma}{E}^{\gamma}$.
Then, the structure
equations
\eqref{EC_eq:Raychaudhuri_general}--\eqref{EC_eq:momentum_conservation_stress_energy_general}, together
with the previous assumptions yield the 
following set of equations.

We have for the kinematical quantities
\begin{align}
&\dot{\theta}  =-4\pi\left(\mu+3p\right)+\Lambda-
\left(\frac{1}{3}\theta^{2}-2S^{\sigma}S_{\sigma}\right)\,,
\label{Cosmology_eq:Raychaudhuri_equation}\\
&D_{\alpha}\theta  =0
\label{Dalpha=0}\,,\\
&\omega^{\gamma}  =S^{\gamma}
\label{omegagammaSgamma}
\,,\\
&\sigma_{\alpha\beta}  =0,
\label{sigmaab=0}\\
&a^{\beta} =0
\label{abeta=0}\,;
\end{align}

for the Weyl tensor components
\begin{align}
&{E}^{\gamma}  =\frac{1}{3}\theta
S^{\gamma}\,,
\label{Cosmology_eq:E_antisymmetric__torsion}\\
&E_{\left(\mu\nu\right)}  =-S_{\left\langle
\mu\right.}S_{\left.\nu\right\rangle
}\,,
\label{Cosmology_eq:E_symmetric__torsion}
\\
&\bar{H}^{\mu\nu} 
=-D^{\left(\mu\right.}S^{\left.\nu\right)}\,,
\label{Cosmology_eq:Hbar_torsion_relation}\\
&H_{\alpha\beta}
=\bar{H}_{\alpha\beta}-h_{\alpha\beta}
\bar{H}_{\sigma}{}^{\sigma}\,,
\label{Cosmology_eq:Hbar_H_relation}\\
&u^{\gamma}\nabla_{\gamma}\bar{H}^{\alpha\beta}+
\frac{4}{3}\theta\bar{H}^{\alpha\beta}
=-2S_{\gamma}\varepsilon^{\mu\gamma\left(\alpha\right.}
\bar{H}^{\left.\beta\right)}{}_{\mu}\,,
\label{Cosmology_eq:dot_Hbar}\\
&\varepsilon_{\alpha}{}^{\mu\nu}D_{\nu}\bar{H}_{\mu\beta}= 
\frac{1}{3}\varepsilon_{\alpha\beta\gamma}S^{\gamma}\left\{
8\pi\mu+\Lambda-\frac{1}{3}\theta^{2}-
S^{\sigma}S_{\sigma}\right\}
\,;\label{Cosmology_eq:constraint_Hbar}
\end{align}

for the torsion field
\begin{align}
&\dot{S}^{\gamma}+\theta S^{\gamma}
=0\,,\label{Cosmology_eq:torsion_conservation}\\
&S^{\gamma}
=\Delta^{\gamma}\,
\label{Cosmology_eq:torsion_spin_density_relation}
\\
&\varepsilon_{\alpha}{}^{\mu\nu}D_{\mu}S_{\nu} 
=0\,,
\label{Cosmology_eq:curl_torsion}
\\
&S^{\mu}D_{\mu}S_{\nu} 
=0\,,\label{Cosmology_eq:torsion_parallel_transport}
\\
&D_{\nu}\left({S}^{\sigma}{S}_{\sigma}\right) 
=0\,;\label{Cosmology_eq:gradient_torsion_norm}
\end{align}

and for the matter variables
\begin{align}
&\dot{\mu}+\theta\left(\mu+p\right)  =0\,,
\label{Cosmology_eq:energy_density_conservation}
\\
&D_{\alpha}\mu  =0\,,\label{Cosmology_eq:Gradient_energy_density}
\\
&D_{\alpha}p  =0\,,\label{Cosmology_eq:Gradient_pressure}
\\
&q_{1}^{\alpha}  =0\,.
\label{Cosmology_eq:q1}
\end{align}

To close the system we have to either impose a function to model the
pressure, $p=p(x^\alpha)$, where $(x^\alpha)$ is some local
coordinate system on the manifold,
or relate $p$ with the energy density $\mu$ through
a barotropic equation of state, $p=p(\mu)$, i.e., 
\begin{align}
p=p(x^\alpha)\quad {\rm or}\quad p=p(\mu)\,.
\label{eos1}
\end{align}
Moreover, we see that there is no divergence
equation for the torsion vector field, other than that is must be equal
to minus the trace of $\bar{H}$. This is expected, since the geometry
and the field equations of the theory alone cannot determine the relation
between the spin density vector $\Delta^{\alpha}$ and the thermodynamical
variables $\mu$ and $p$: this is something that has to be provided
by a physical model for the matter. Therefore, to completely close
the system, we must either provide an ad hoc expression for
the spin density vector field, such that
$\Delta^{\alpha}=\Delta^{\alpha}\left( x^\alpha \right)$ or,
more physically motivated, an equation
that relates the spin density vector field with $\mu$ and $p$,
$\Delta^{\alpha}=\Delta^{\alpha}(\mu,p)$, i.e.,
\begin{align}
\Delta=\Delta(x^\alpha)\quad {\rm or}\quad\Delta^{\alpha}=
\Delta^{\alpha}(\mu,p)\,.
\label{eos2}
\end{align}

To further compare our results with those in the literature, notice that
from Eqs.~(\ref{Cosmology_eq:Hbar_torsion_relation}),
(\ref{Cosmology_eq:torsion_parallel_transport})
and (\ref{Cosmology_eq:gradient_torsion_norm})
we find that $\bar{H}_{\alpha\rho}S^{\rho}=0$.
This constraint, $\bar{H}_{\alpha\rho}S^{\rho}=0$, 
coincides with 
a constraint given in \citep{Hehl_1971,Puetzfeld_Obukhov_2007},
in which it is assumed 
that each element of volume of the fluid follows auto-parallel
curves and its rest mass is constant.
In our derivation of this constraint,
we have not assumed 
that the rest mass is constant, however
we have imposed that the fluid's volume
elements have zero acceleration
and it can be shown that this
implies that their rest mass is constant,
making the whole procedure consistent.

The previous set of equations can be written
in a somewhat more compact form. Remembering that the magnetic parts
of the Weyl tensor, $H$ and $\bar{H}$, are symmetric tensors,
Eqs.~(\ref{Cosmology_eq:Hbar_torsion_relation})
and (\ref{Cosmology_eq:curl_torsion}) can be replaced by the single
equation $\bar{H}^{\mu\nu}=-D^{\mu}S^{\nu}$. In that case the relation
$\bar{H}_{\alpha\rho}S^{\rho}=0$, can replace the
equations~(\ref{Cosmology_eq:torsion_parallel_transport})
and (\ref{Cosmology_eq:gradient_torsion_norm}). Notwithstanding,
we choose to keep all these properties explicit to avoid any confusion.

\subsection{Geometry of the 3-spaces for the universe with
homogeneous spinning fluid
\label{Subsec:Geometry_3_surf}}

We will now study some implications of the field equations,
Eqs.~\eqref{Cosmology_eq:Raychaudhuri_equation}--\eqref{Cosmology_eq:q1}. The
matter equations of state, Eqs.~\eqref{eos1} and \eqref{eos2}, will
not be used at this stage. We analyze in detail and obtain concrete
results related to the geometry of the 3-spaces, i.e.,
3-hypersurfaces, orthogonal to the congruence.

Without loss of generality, we will consider that the separation
vector field $n$, introduced in
Section~\ref{Subsec:The-separation-vector}, at each point is
orthogonal to the tangent vector field $u$, such that
$n^{\mu}u_{\mu}=0$ and $n^{\alpha}$ is spacelike, i.e., we will
consider the separation between points in the same orthogonal
hypersurface. This is always possible since, at a given point, we may
decompose a general separation vector in its components along $u$ and
orthogonal to it. Then, in the light of
Eq.~(\ref{Separation_eq:Dervivative_naUa}), in the considered setup,
the orthogonal part will stay orthogonal to $u$ as we move along its
integral curves. From Eqs.~\eqref{Separation_eq:derivative_n_relation}
and \eqref{Separation_eq:B_orthl_esv_general} and setting
$\sigma_{\alpha\beta}=0$ in accord to Eq.~\eqref{sigmaab=0} we have
\begin{equation}
\frac{D\left(n_{\alpha}n^{\alpha}\right)}{D\tau}=\frac{2}{3}
\theta n^{\alpha}n_{\alpha}\,,
\label{Cosmology_eq:1st_derivative_norm_separation} 
\end{equation}
where we used the notation $\frac{D}{D\tau}\equiv
u^{\gamma}\nabla_{\gamma}$, with $\tau$ being an affine parameter
parameterizing the integral curves of $u$, that is $\tau$ represents
up to a constant the proper time measured by an observer comoving with
the fiducial curve of the congruence. Taking another derivative along
$u$ we find
\begin{equation} 
\frac{D^{2}\left(n_{\alpha}n^{\alpha}\right)}{D\tau^{2}}=
\frac{2}{3}\dot{\theta}n^{\alpha}n_{\alpha}+\frac{4}{9}
\theta^{2}n^{\alpha}n_{\alpha}\,,
\label{Cosmology_eq:accel_norm_expansion}
\end{equation} relating the second derivative of
the square of the norm of the separation vector with the expansion 
coefficient of the congruence and its derivative.

Since $n$ is spacelike, we can define a length $\ell$ through the equation
\begin{equation} 
\ell=\sqrt{n^{\alpha}n_{\alpha}} \,,
\label{Cosmology_eq:lengthelldefinition}
\end{equation} 
where in general $\ell:\mathcal{M}\to \mathbb{R}$, that is, in some
local coordinate system, $\ell=\ell(x^\alpha)$, specifically of proper
time $\tau$ and the spatial coordinates on the
hypersurface. Nonetheless, since we have imposed $D_{\alpha}\theta=0$,
Eq.~\eqref{Dalpha=0}, it is always possible to define $n$ to represent
the separation vector between points at a fixed proper length at some
particular hypersurface, then
Eqs.~\eqref{Cosmology_eq:1st_derivative_norm_separation} and
\eqref{Cosmology_eq:lengthelldefinition} imply that
\begin{equation}
\ell=\ell\left(\tau\right)\,,
\label{elltau}
\end{equation}
and Eq.~(\ref{Cosmology_eq:1st_derivative_norm_separation}) can be written as 
\begin{equation}
\frac{1}{3}\theta=\frac{\dot{\ell}}{\ell}\,. 
\label{Cosmology_eq:scale_factor_expansion_definition} 
\end{equation}

Now, let $h_{ab}$, $ ^{3}R_{ab}$ and $^{3}R$ to represent,
respectively, the induced metric, the intrinsic Ricci tensor and the
intrinsic Ricci scalar of an orthogonal 3-hypersurface. Then, in the
considered setup, the Gauss embedding equation of differential
geometry yields the following relations between $ ^{3}R_{ab}$ and
$^{3}R$, and the induced metric, the kinematical and matter variables,
\begin{align}
& ^{3}R_{ab} =\frac{2}{3}h_{ab}\left(-\frac{1}{3}\theta^{2}-
S_{\sigma}S^{\sigma} + 8\pi\mu+\Lambda\right)\,, 
\label{Cosmology_eq:induced_Ricci_tensor_general}
\\
& ^{3}R =2\left( - \frac{1}{3}\theta^{2}-{S}^{\sigma}{S}_{\sigma} +
8\pi\mu+\Lambda \right)\,. 
\label{Cosmology_eq:Generalized_Friedmann} 
\end{align}
Equation~\eqref{Cosmology_eq:Generalized_Friedmann} is the generalized
Friedman equation for the Einstein-Cartan system we are interested. We
remark that for the type of torsion that is being considered,
Eq.~\eqref{EC_eq:simplifiedtorsion}, one can show that the induced
connection on the orthogonal slices to $u$ is the Levi-Civita
connection, hence $^{3}R_{ab}$ and $^{3}R$ represent the Ricci tensor
and Ricci scalar associated with the induced metric, $h_{ab}$.

From Eq.~(\ref{Cosmology_eq:Generalized_Friedmann}) we find
$
^{3}R\ell^{2}=-6\dot{\ell}^{2}-2{S}^{\sigma}{S}_{\sigma}\ell^{2}+
16\pi\mu\ell^{2}+2\Lambda\ell^{2}
$. Taking the derivative along $u$ of this equation and using the
Raychaudhuri equation~(\ref{Cosmology_eq:Raychaudhuri_equation}) and
the conservation equations~\eqref{Cosmology_eq:torsion_conservation}
and \eqref{Cosmology_eq:energy_density_conservation} yields
$\frac{D}{d\tau}\left(^{3}R\ell^{2}\right)=0$, that is, the quantity
$^{3}R\ell^{2}$ is a constant function between distinct hypersurfaces.
Indeed, using Eqs.~\eqref{Dalpha=0},
\eqref{Cosmology_eq:gradient_torsion_norm} and
\eqref{Cosmology_eq:Gradient_energy_density} we conclude that
\begin{equation}
^{3}R=\frac{6K}{\ell^2}\,,
\label{Cosmology_eq:R3_constant_general}
\end{equation}
where $K$ is some constant to be dealt with and the number 6
appears for convenience.
So, using Eqs.~\eqref{elltau} and \eqref{Cosmology_eq:R3_constant_general}
we have that the orthogonal 3-hypersurfaces are manifolds of
constant Ricci curvature, i.e.,
$\left.^{3}R\right|_{\tau}=\text{constant}$.
This result in
conjunction with
Eqs.~\eqref{Cosmology_eq:induced_Ricci_tensor_general} and
\eqref{Cosmology_eq:Generalized_Friedmann} leads us
to conclude that the Ricci
tensor of the 3-hypersurfaces is of the form $^{3}R_{ab}
=\frac{2K}{\ell^2} h_{ab}$,
i.e., a constant times the metric,
so that in the considered setup the
3-hypersurfaces are Einstein manifolds.
Now, in 3 dimensions the Riemann tensor is fully
characterized} by the Ricci tensor, specifically,
$^{3}R_{abcd}=2\left(^{3}R_{a\left[c\right.}
h_{\left.d\right]b}-\,{}^{3}R_{b\left[c\right.}
h_{\left.d\right]a}\right)-\,^{3}R\,h_{a\left[c\right.}
h_{\left.d\right]b}$,
which in the considered setup implies $^{3}
R_{abcd}=\frac{K}{\ell^{2}}\left(h_{ac}h_{db}-h_{ad}h_{cb}\right)$,
and
so the 
3-hypersurfaces are surfaces of
constant spatial curvature.
Since we assume that the
3-hypersurfaces are complete and simply-connected, we have that
the 3-hypersurfaces are isometric to the 3-hyperbolic space, to the
3-Euclidean space, or to the 3-sphere, in other words, in the
considered setup, the 3-hypersurfaces are isotropic and homogeneous
and the metric of the spacetime is a FLRW solution. Nonetheless, note
that due to the presence of the torsion tensor, the whole spacetime is
not described solely by the metric tensor.
In the light of these results, we can relate the value of the
integration constant $K$ in Eq.~\eqref{Cosmology_eq:R3_constant_general}
with the value of the
constant curvature of each 3-hypersurface,
i.e., $K=\left\{-1,0,1\right\}$, corresponding to the cases when the
orthogonal hypersurfaces are, for the natural topology, open and
hyperbolic, open and flat, or closed and spherical, respectively.
Note, however,
that depending on the topology, the solutions with $K=-1$ or $K=0$
need not be necessarily open, whereas the family of solutions with
$K=1$, to which the spherical solution belongs to, is necessarily
closed, see, e.g., \citep{Ellis_Maartens_MacCallum_Book}.

Although the metric tensor is a FLRW solution, the presence of torsion
modifies the geometry of the spacetime, in particular, we have found
that the Weyl tensor does not have to vanish,
see, e.g., Eq.~\eqref{Cosmology_eq:E_antisymmetric__torsion}.
This, of course, has
profound implications in the geometry of the spacetime and the type
of solutions that are allowed. In the light of the field equations,
we find the following results.
\begin{thm}
\label{Theorem:Negative_zero_curvature}
In the considered setup, if
$S_{\alpha}S^{\alpha}\neq0$ and
$D_{\alpha}\Delta^{\alpha}=f\left(\mu,p\right)$, where $f$ is an
arbitrary differentiable function, then
\begin{equation}
^{3}R\leq0\,,
\label{3Rleq0}
\end{equation}
i.e.,  $K=-1$ or $K=0$ in the FLRW metric.
Moreover,
each orthogonal hypersurface is flat, that is 
$^{3}R=0$, if and only if $\bar{H}^{\alpha\beta}=0$ for all points on
the hypersurface.
\end{thm}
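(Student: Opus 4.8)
The plan is to convert the differential constraint on $\bar{H}$, Eq.~\eqref{Cosmology_eq:constraint_Hbar}, into a sign-definite algebraic identity relating $\bar{H}_{\rho\beta}\bar{H}^{\rho\beta}$ to the product $K\,S^{\sigma}S_{\sigma}$, the sign then being fixed by positive-definiteness of the induced spatial metric. The first observation is that, by the generalized Friedman equation~\eqref{Cosmology_eq:Generalized_Friedmann} together with Eq.~\eqref{Cosmology_eq:R3_constant_general}, the bracketed factor in Eq.~\eqref{Cosmology_eq:constraint_Hbar} equals $\tfrac12\,{}^{3}R=3K/\ell^{2}$, so the constraint reads $\varepsilon_{\alpha}{}^{\mu\nu}D_{\nu}\bar{H}_{\mu\beta}=\frac{K}{\ell^{2}}\,\varepsilon_{\alpha\beta\gamma}S^{\gamma}$.

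Next I would contract both sides with $\varepsilon^{\alpha\beta\rho}$. Using the projected Levi-Civita identities of Sec.~\ref{Section:Conventions}, the left-hand side collapses to $D^{\rho}\bar{H}_{\mu}{}^{\mu}-D_{\beta}\bar{H}^{\rho\beta}$ and the right-hand side to $\tfrac{2K}{\ell^{2}}S^{\rho}$. The trace term is then eliminated by the hypothesis: one has $\bar{H}_{\mu}{}^{\mu}=-D_{\mu}S^{\mu}=-D_{\mu}\Delta^{\mu}=-f(\mu,p)$, using $S^{\gamma}=\Delta^{\gamma}$ from Eq.~\eqref{Cosmology_eq:torsion_spin_density_relation}, and since $D_{\alpha}\mu=0$ and $D_{\alpha}p=0$ [Eqs.~\eqref{Cosmology_eq:Gradient_energy_density}, \eqref{Cosmology_eq:Gradient_pressure}] its projected gradient vanishes, $D^{\rho}\bar{H}_{\mu}{}^{\mu}=0$. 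This leaves $D_{\beta}\bar{H}^{\rho\beta}=-\tfrac{2K}{\ell^{2}}S^{\rho}$.

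I would then contract this relation with $S_{\rho}$. Using the derived relation $\bar{H}_{\alpha\rho}S^{\rho}=0$ and the symmetric compact form $\bar{H}^{\rho\beta}=-D^{\rho}S^{\beta}$ [Eqs.~\eqref{Cosmology_eq:Hbar_torsion_relation}, \eqref{Cosmology_eq:curl_torsion}], the Leibniz rule gives $S_{\rho}D_{\beta}\bar{H}^{\rho\beta}=-(D_{\beta}S_{\rho})\bar{H}^{\rho\beta}=\bar{H}_{\rho\beta}\bar{H}^{\rho\beta}$, whence
\[
\bar{H}_{\rho\beta}\bar{H}^{\rho\beta}=-\frac{2K}{\ell^{2}}\,S^{\sigma}S_{\sigma}\,.
\]
Since the induced $3$-metric $h_{ab}$ is positive definite, the left-hand side is nonnegative and vanishes if and only if $\bar{H}^{\alpha\beta}=0$; and $S^{\sigma}S_{\sigma}>0$ because $S^{\sigma}$ is spacelike with $S^{\sigma}S_{\sigma}\neq0$ by hypothesis. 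Therefore $K\leq0$, i.e.\ ${}^{3}R\leq0$, which is Eq.~\eqref{3Rleq0}, and moreover ${}^{3}R=0$ iff $K=0$ iff $\bar{H}_{\rho\beta}\bar{H}^{\rho\beta}=0$ iff $\bar{H}^{\alpha\beta}=0$, giving the stated equivalence.

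The main obstacle is the control of the trace gradient $D^{\rho}\bar{H}_{\mu}{}^{\mu}$: this is precisely the step where the hypothesis $D_{\alpha}\Delta^{\alpha}=f(\mu,p)$ is indispensable, since without it the projected gradient of the divergence of the spin would survive the $\varepsilon^{\alpha\beta\rho}$ contraction and spoil the sign-definite character of the final identity. Everything else is routine index manipulation with the projected Levi-Civita contractions and the Leibniz rule, and no integration over the hypersurface is needed, so the argument is purely local.
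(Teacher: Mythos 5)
Your argument is correct and follows essentially the same route as the paper: after eliminating the trace gradient $D^{\rho}\bar{H}_{\mu}{}^{\mu}$ via the hypothesis $D_{\alpha}\Delta^{\alpha}=f(\mu,p)$ together with Eqs.~\eqref{Cosmology_eq:Gradient_energy_density} and \eqref{Cosmology_eq:Gradient_pressure}, your identity $\bar{H}_{\rho\beta}\bar{H}^{\rho\beta}=-\tfrac{2K}{\ell^{2}}S^{\sigma}S_{\sigma}$ is exactly the paper's Eq.~\eqref{Cosmology_eq:intermediate_proof_2}, and the concluding positivity argument on the Riemannian hypersurfaces is the same. You have merely made explicit the $\varepsilon$-contractions and Leibniz-rule steps that the paper leaves implicit in passing from Eq.~\eqref{Cosmology_eq:constraint_Hbar} to Eq.~\eqref{Cosmology_eq:intermediate_proof_1}.
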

\begin{proof}
From Eqs.~(\ref{Cosmology_eq:Hbar_torsion_relation})
and (\ref{Cosmology_eq:constraint_Hbar})
and $\bar{H}_{\alpha\rho}S^{\rho}=0$, we find the following relation
\begin{equation}
S^{\delta}D_{\delta}\left(D_{\alpha}S^{\alpha}\right)=-
\frac{2}{3}S_{\delta}S^{\delta}\left\{ 8\pi\mu+\Lambda-
\frac{1}{3}\theta^{2}-S^{\sigma}S_{\sigma}\right\} -
\bar{H}^{\mu\nu}\bar{H}_{\mu\nu}\,.
\label{Cosmology_eq:intermediate_proof_1}
\end{equation}
Note
that for the type of torsion that we
are considering, Eq.~\eqref{EC_eq:simplifiedtorsion},
one has that $D_{\alpha}Y^{\alpha}$
is indeed the divergence of a vector field $Y$ orthogonal to $u$,
so that $D_{\alpha}S^{\alpha}$ is the divergence
of the torsion vector $S$.
Imposing that the divergence of the spin density vector is a differentiable
function of the energy density $\mu$ and the pressure $p$, that
is $D_{\alpha}\Delta^{\alpha}=f\left(\mu,p\right)$, and
using Eqs.~(\ref{Cosmology_eq:torsion_spin_density_relation}),
(\ref{Cosmology_eq:Gradient_energy_density}) and
(\ref{Cosmology_eq:Gradient_pressure})
implies that $D_{\delta}\left(D_{\alpha}S^{\alpha}\right)=0$.
Using this result
and the Friedman equation~(\ref{Cosmology_eq:Generalized_Friedmann}),
Eq.~(\ref{Cosmology_eq:intermediate_proof_1}) yields
\begin{equation}
\bar{H}^{\mu\nu}\bar{H}_{\mu\nu}=-\frac{^{3}R}{3}S_{\delta}S^{\delta}.
\label{Cosmology_eq:intermediate_proof_2}
\end{equation}
Since the hypersurfaces orthogonal to the tangent vector field $u$
are Riemannian
manifolds, the terms $\bar{H}^{\mu\nu}\bar{H}_{\mu\nu}$
and $S_{\delta}S^{\delta}$
must be non-negative, therefore a consistent solution of the field
equations with $S_{\delta}S^{\delta}\neq0$ must verify $^{3}R\leq0$,
i.e.,  $K=-1$ or $K=0$ in the FLRW metric.
Using this same argument, it follows that if $^{3}R=0$,
then $\bar{H}^{\mu\nu}=0$.
Of course, trivially, if $\bar{H}^{\mu\nu}=0$ and $S_{\delta}S^{\delta}\neq0$,
then $^{3}R=0$.
\end{proof}

The result in Theorem~\ref{Theorem:Negative_zero_curvature} is
quite surprising. In the considered setup and for a nonvanishing
torsion vector field, the orthogonal hypersurfaces
that foliate the spacetime must either
have negative curvature or be
Ricci flat. In addition, we find the following result:
\begin{thm}
\label{Theorem:Global_not_closed}In the considered setup,
if the torsion $S$ is such that
$S_{\alpha}S^{\alpha}\neq0$
for all points on the hypersurfaces orthogonal to the congruence associated
with $u$, then the hypersurfaces cannot be closed.
\end{thm}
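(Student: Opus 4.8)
The plan is to establish a topological obstruction to compactness coming directly from the curl-free nature of the torsion vector field, independently of the curvature sign obtained in Theorem~\ref{Theorem:Negative_zero_curvature}. The starting point is Eq.~\eqref{Cosmology_eq:curl_torsion}, which states that $\varepsilon_{\alpha}{}^{\mu\nu}D_{\mu}S_{\nu}=0$ on each orthogonal 3-hypersurface. Since the induced connection on these slices is the Levi-Civita connection of the induced metric (as noted after Eq.~\eqref{Cosmology_eq:Generalized_Friedmann}), the antisymmetrized derivative $D_{[\mu}S_{\nu]}$ coincides with the ordinary exterior derivative; and in three dimensions the condition $\varepsilon_{\alpha}{}^{\mu\nu}D_{\mu}S_{\nu}=0$ is equivalent, by the invertibility of the Hodge dual, to $D_{[\mu}S_{\nu]}=0$. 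In other words, regarded as a 1-form on the hypersurface, $S$ is closed.

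First I would invoke assumption~(vi), that each orthogonal hypersurface is complete and simply-connected. On a simply-connected manifold the first de Rham cohomology vanishes, so every closed 1-form is exact; hence there exists a globally defined smooth potential $\phi$ on the hypersurface with $S_{\nu}=D_{\nu}\phi$. Combining this with Eq.~\eqref{Cosmology_eq:gradient_torsion_norm}, which gives $D_{\nu}\!\left(S^{\sigma}S_{\sigma}\right)=0$, the potential satisfies $\|D\phi\|^{2}=S^{\sigma}S_{\sigma}=\text{const}$ on each slice, the constant being strictly positive because the hypersurface is spacelike, hence Riemannian, and because $S_{\alpha}S^{\alpha}\neq0$ by hypothesis.

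Next I would argue by contradiction. Suppose the hypersurface were closed, that is, compact without boundary. Then the smooth function $\phi$ attains its maximum at some point $p$, where necessarily $D_{\nu}\phi|_{p}=0$, and therefore $S_{\nu}|_{p}=0$, giving $S_{\alpha}S^{\alpha}|_{p}=0$. This contradicts the standing hypothesis that $S_{\alpha}S^{\alpha}\neq0$ at \emph{every} point of the hypersurface. Consequently the hypersurface cannot be compact without boundary, which is precisely the assertion that the universe cannot be closed. Note that this route does not require the extra hypothesis $D_{\alpha}\Delta^{\alpha}=f(\mu,p)$ of Theorem~\ref{Theorem:Negative_zero_curvature}, so it is genuinely more general than simply excluding $K=1$.

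The crux of the argument, and the step I expect to need the most care in stating, is the passage from the pointwise curl-free condition to a \emph{globally} defined potential $\phi$. This step relies essentially on simple-connectedness: without it one could only guarantee a local primitive, and the obstruction would fail—indeed, a nowhere-vanishing closed 1-form of constant norm does exist on compact spaces with nontrivial $H^{1}$, such as a flat 3-torus. Thus the topological hypothesis~(vi) is not merely a technical convenience but is exactly what converts the local differential constraint on the torsion into the global obstruction forbidding a closed universe.
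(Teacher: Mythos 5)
Your proof is correct and follows essentially the same route as the paper's: both identify the torsion 1-form as closed via Eq.~\eqref{Cosmology_eq:curl_torsion}, use simple-connectedness (assumption~(vi)) to obtain a global potential $\phi$ with $S=d\phi\neq0$, and then derive a contradiction with compactness. The only difference is the final step, where you use the fact that $\phi$ must have a critical point at its maximum on a closed manifold, while the paper invokes Stokes' theorem to argue that the integral of the exact form $d\phi$ over the closed hypersurface must vanish; your version is, if anything, the more cleanly stated of the two.
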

\begin{proof}
Let us start by recalling that we
have imposed the congruence $u$ to be
hypersurface orthogonal. We then chose a frame where the orthogonal
slices $ \mathcal{V} $ are hypersurfaces, hence there
exists an embedding between each hypersurface $\mathcal{V}$ and a
Riemannian manifold $\left(\mathcal{V},h\right)$, where $h$ represents
the induced metric and, for the type of torsion that we
are considering
in this section, the induced torsion tensor is zero. Since an embedding
exists, we can pull-back and push-forward nonvanishing orthogonal
tensor fields in $\left(\mathcal{M},g,S\right)$ to nonvanishing
tensor fields in $\left(\mathcal{V},h\right)$. In particular, the
pull-back of the projected covariant derivatives of an orthogonal 1-form field
${Y}_{\alpha}$, that is ${Y}_{\alpha}u^{\alpha}=0$, is given by
$D_{a}{Y}_{b}$, where ${Y}_{a}$ represents the pull-back of ${Y}_{\alpha}$
and defines the induced connection in $\left(\mathcal{V},h\right)$,
which is simply the Levi-Civita connection associated with $h$.
Of course, ${Y}_{\alpha}$ represents the components
of ${Y}\in T_{p}^{*}\mathcal{M}$
in a local coordinate system and ${Y}_{a}$ the components of
${Y}\in T_{p}^{*}\mathcal{V}$
in a local coordinate system,
where $T_{p}^{*}$ means the cotangent space of the corresponding manifold at
the point $p$, however, although an abuse of language,
it is much simpler and became kind of a convention to distinguish
between the two tensors by using Greek and Latin letters.

From Eq.~(\ref{Cosmology_eq:curl_torsion}), if we define ${S}_{a}$
as the pull-back of the 1-form $S_{\alpha}$, we find that it verifies
\begin{equation}
\varepsilon_{a}{}^{bc}D_{b}{S}_{c}=0\Leftrightarrow
\varepsilon_{a}{}^{bc}\partial_{b}{S}_{c}=0\,,
\end{equation}
where $\varepsilon_{abc}$ represents the
Levi-Civita tensor in $\left(\mathcal{V},h\right)$.
Therefore, ${S}_{a}$ is an exact 1-form, that is, there exists
a function $\phi$, such that ${S}=d\phi$. Moreover, since
$h$ is a Riemannian metric, it is nondegenerate, hence the condition
${S}_{a}{S}^{a}\neq0$ implies that $d\phi\neq0$. To
clarify, the induced torsion tensor on $\left(\mathcal{V},h\right)$
is zero, meaning that the manifold is endowed with only the Levi-Civita
connection. However, ${S}_{a}$ does not have to be zero, and
it should be regarded simply as a 1-form field in $T^{*}\mathcal{V}$
with no relation with the connection.

Now, if $\mathcal{V}$ is closed, it is, by definition,
compact and it has no boundary, then, from Stokes
Theorem, we have $\int_{\mathcal{V}}d\phi=0$. However, $d\phi\neq0$,
hence $d\phi$ is a volume form and its integral over $\mathcal{V}$
cannot be zero.
\end{proof}

This result is also surprising. In addition to the
result of Theorem~\ref{Theorem:Negative_zero_curvature} asserting
that the orthogonal
hypersurfaces cannot have positive curvature, i.e., $K\leq0$,
we see that
Theorem~\ref{Theorem:Global_not_closed} establishes that they
also cannot be closed, limiting the topology of these solutions.  This
is indeed a great disparity between the theory of Einstein-Cartan and
general relativity, since in the latter there is no limitation in the
sign of $K$ nor on the topology.

\vskip 1cm

The intermediate results for the proof of
Theorem~\ref{Theorem:Negative_zero_curvature}
also allow us to infer the behavior of the magnetic part of the Weyl
tensor. Considering
Eqs.~\eqref{Cosmology_eq:torsion_conservation},
\eqref{Cosmology_eq:scale_factor_expansion_definition},
\eqref{Cosmology_eq:R3_constant_general} and
\eqref{Cosmology_eq:intermediate_proof_2}
we have:

\newpage

\begin{prop}
\label{Proposition:Behavior_H_square}In the considered setup, if
$S_{\alpha}S^{\alpha}\neq0$, $^{3}R<0$ and $D_{\alpha}
\Delta^{\alpha}=f\left(\mu,p\right)$,
where $f$ is an arbitrary differentiable function,
then $\bar{H}^{\mu\nu}\bar{H}_{\mu\nu}\sim\frac{1}{\ell^{8}}$.
\end{prop}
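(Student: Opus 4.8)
The plan is to start from the exact algebraic identity \eqref{Cosmology_eq:intermediate_proof_2}, namely $\bar{H}^{\mu\nu}\bar{H}_{\mu\nu}=-\frac{^{3}R}{3}S_{\delta}S^{\delta}$, which holds precisely under the stated hypotheses $S_{\alpha}S^{\alpha}\neq0$ and $D_{\alpha}\Delta^{\alpha}=f(\mu,p)$, exactly as derived in the proof of Theorem~\ref{Theorem:Negative_zero_curvature}. Since this relation is exact, the task reduces to determining the $\ell$-dependence of the two factors on its right-hand side: the intrinsic Ricci scalar $^{3}R$ and the squared norm of the torsion vector $S_{\delta}S^{\delta}$.

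First I would fix the curvature factor. By hypothesis $^{3}R<0$, so Theorem~\ref{Theorem:Negative_zero_curvature} forces $K=-1$, and Eq.~\eqref{Cosmology_eq:R3_constant_general} then gives $^{3}R=-6/\ell^{2}$. Substituting this into \eqref{Cosmology_eq:intermediate_proof_2} produces $\bar{H}^{\mu\nu}\bar{H}_{\mu\nu}=\frac{2}{\ell^{2}}S_{\delta}S^{\delta}$, so the claim now hinges entirely on showing that $S_{\delta}S^{\delta}\propto\ell^{-6}$.

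The key step is to extract this scaling from the torsion conservation law. Writing $\mathcal{S}\equiv S_{\delta}S^{\delta}$ and differentiating along $u$, metric compatibility gives $\dot{\mathcal{S}}=2S_{\delta}\dot{S}^{\delta}$; inserting Eq.~\eqref{Cosmology_eq:torsion_conservation}, $\dot{S}^{\gamma}=-\theta S^{\gamma}$, yields $\dot{\mathcal{S}}=-2\theta\mathcal{S}$. Using the expansion--scale-factor relation \eqref{Cosmology_eq:scale_factor_expansion_definition}, $\theta=3\dot{\ell}/\ell$, this becomes $\dot{\mathcal{S}}/\mathcal{S}=-6\,\dot{\ell}/\ell$, which integrates to $\mathcal{S}\,\ell^{6}=\text{const}$. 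Because $D_{\nu}\mathcal{S}=0$ by Eq.~\eqref{Cosmology_eq:gradient_torsion_norm} and $\ell=\ell(\tau)$, this integration constant is genuinely fixed across the whole spacetime rather than being only curve-wise constant, so $S_{\delta}S^{\delta}=C/\ell^{6}$ for some constant $C$.

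Combining the two scalings gives $\bar{H}^{\mu\nu}\bar{H}_{\mu\nu}=2C/\ell^{8}$, which is the asserted behaviour $\bar{H}^{\mu\nu}\bar{H}_{\mu\nu}\sim\ell^{-8}$. I do not anticipate a genuine obstacle here: the computation is short and algebraic, and the only point requiring care is confirming that the integration constant is spacetime-independent, which is exactly why invoking the spatial homogeneity condition \eqref{Cosmology_eq:gradient_torsion_norm} together with $\ell=\ell(\tau)$ is essential to the argument.
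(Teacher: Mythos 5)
Your argument is correct and follows essentially the same route as the paper, which derives the proposition from precisely Eqs.~\eqref{Cosmology_eq:torsion_conservation}, \eqref{Cosmology_eq:scale_factor_expansion_definition}, \eqref{Cosmology_eq:R3_constant_general} and \eqref{Cosmology_eq:intermediate_proof_2} — exactly the chain you use to get $S_{\delta}S^{\delta}\propto\ell^{-6}$ and $^{3}R=-6/\ell^{2}$, hence $\bar{H}^{\mu\nu}\bar{H}_{\mu\nu}\propto\ell^{-8}$. The only addition is your explicit check, via Eq.~\eqref{Cosmology_eq:gradient_torsion_norm} and $\ell=\ell(\tau)$, that the integration constant is spatially uniform, a point the paper leaves implicit.
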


Proposition~\ref{Proposition:Behavior_H_square} establishes the
behavior of the tensor $\bar{H}$, and similarly for $H$, in terms
of the scale factor $\ell$, so that if the spacetime is expanding,
$\bar{H}^{\mu\nu}\bar{H}_{\mu\nu}$ tends to zero as $\frac{1}{\ell^{8}}$.
In the next section we  further study the tensor $\bar{H}$,
in particular we  show that it is possible to derive a wave equation
for $\bar{H}$ and then study its solutions.

We use the results obtained here to clarify some confusion
regarding the possibility to consider a torsion
caused by an intrinsic spin of matter
in a cosmological context.
In \citep{Tsamparlis_1981} it
was shown that, under certain conditions, the symmetries of the metric
tensor, in the form of Killing vector fields, are also symmetries of
the torsion tensor. Under those conditions, of course, it was then
found that a torsion tensor having its origin in the intrinsic spin of matter, 
a torsion tensor of the form $S_{\alpha\beta}{}^{\gamma}=
\varepsilon_{\alpha\beta\mu}S^{\mu}u^{\gamma}$, where $S^{\mu}$ is a
spacelike vector field, is not compatible with the cosmological
principle. Since the publication of \citep{Tsamparlis_1981}, much of
the literature considering an isotropic and homogeneous universe in
the Einstein-Cartan theory has completely disregarded a torsion tensor
of the previous form. However, we have to analyze the conditions under
which it is valid
the assertion that symmetries of the metric tensor are also
symmetries of the torsion tensor. The pivotal condition is
that the symmetries of the metric are also symmetries of the metric
stress-energy tensor, however, in \citep{Tsamparlis_1981} it is
clearly stated that this very strong condition is imposed ad hoc and,
contrary to the theory of general relativity, does not follow from the field
equations of the Einstein-Cartan theory. Nonetheless, it is defended
that this is a reasonable assumption if the Einstein-Cartan theory is
considered, in some sense, as a slight modification to general
relativity. This, however, in general is not the case. As we can
readily infer from the structure equations
\eqref{Cosmology_eq:Raychaudhuri_equation}--\eqref{Cosmology_eq:q1},
the Einstein-Cartan theory, in general, is not a slight modification
to general relativity. For instance, notice that the torsion tensor
directly couples and acts as a source to the Weyl tensor.
Of course,
models with a vanishing Weyl tensor, as it is the case in general
relativity, or a nonvanishing Weyl tensor, as
generically presented here for the Einstein-Cartan theory
we have been considering, 
represent very
distinct physical setups.
As shown above, the
torsion tensor does not have to have the same symmetries of the metric
tensor and the model just constructed is a consistent solution of
the Einstein-Cartan theory in a cosmological context for a universe
permeated by an isotropic and homogeneous matter fluid.

\newpage

\section{Gravitational waves in relativitstic cosmology in
Einstein-Cartan theory}
\label{Subsec:Wave-equation}

\subsection{Derivation of the gravitational wave
equation
for the isotropic universe}

Comparing
Eqs.~\eqref{Cosmology_eq:Raychaudhuri_equation}--\eqref{Cosmology_eq:q1}
with those found in the theory of general relativity for a homogeneous
and isotropic spacetime, see, e.g.,
\citep{Ellis_vanElst_1999,Ellis_Maartens_MacCallum_Book}, we see that
a glaring difference is that the Weyl tensor is, in general, not
identically zero. This contrast between the two theories has profound
implications in the evolution of the spacetime geometry and of the
matter fluid. Indeed, in the previous section we have found that a
nonvanishing Weyl tensor restricts the allowed geometry and topology of the
orthogonal hypersurfaces, a restriction that does not exist in general
relativity. In this section, we study further the Weyl tensor and its
effects on the evolution of the spacetime curvature.

The Weyl tensor is known to be related with gravitational waves and
tidal forces, which in fact are interconnected
phenomena.  In the
model we are considering here, we
have found that torsion and its derivatives act as a source for the
Weyl tensor components, hence a natural step to understand the
solutions of the structure equations is to study the presence of
gravitational waves
induced by the matter intrinsic spin. Due to the presence of
torsion, if $K=-1$, the magnetic part of the Weyl tensor is
nonvanishing. In this subsection, we will show that if the torsion
tensor is caused by the matter fluid, the traceless part of $\bar{H}$,
obeys a wave equation.
These wave equations can be formally solved, explicitly
showing that in a nonstatic universe the presence of intrinsic spin leads
to the generation and emission of gravitational waves.
From Eq.~\eqref{Cosmology_eq:Hbar_H_relation}, $H$ and $\bar{H}$ have
the same traceless part, but distinct trace, namely $H_{\left\langle
\alpha\beta\right\rangle }=\bar{H}_{\left\langle
\alpha\beta\right\rangle }$ and
$H_{\alpha}{}^{\alpha}=-2\bar{H}_{\alpha}{}^{\alpha}$. In fact, it is
straightforward to show that $H$ and $\bar{H}$ have the same
eigenvectors, but associated with distinct eigenvalues. Then, in this
section we will focus on studying $\bar{H}$ and all results are
directly extended to $H$.

From the Ricci
identity~(\ref{Conventions_eq:Riemann_tensor_definition}) and the
field equations, we find, in the considered setup, the
following expression for the projected derivative of the divergence of
$\bar{H}$,
\begin{equation}
\begin{aligned}
\left(D_{\mu}D_{\alpha}\bar{H}^{\mu}{}_{\beta}\right)-
\left(D_{\alpha}D_{\mu}\bar{H}^{\mu}{}_{\beta}\right)=
& \left(8\pi\mu+\Lambda-S_{\sigma}S^{\sigma}-
\frac{1}{3}\theta^{2}\right)\bar{H}_{\left\langle
 \alpha\beta\right\rangle }\\ &
 +\frac{1}{3}\theta
 \left(\bar{H}^{\mu}{}_{\mu}\varepsilon_{\alpha\beta}{}^{\gamma}-
 \bar{H}_{\mu\beta}\varepsilon_{\alpha}{}^{\mu\gamma}-
 \bar{H}^{\mu}{}_{\alpha}\varepsilon_{\mu\beta}{}^{\gamma}\right)
 S_{\gamma}\,.
\end{aligned}
\end{equation}
On the other hand, Eq.~(\ref{Cosmology_eq:constraint_Hbar}) implies
\begin{equation}
D_{\mu}\left(D^{\alpha}\bar{H}^{\mu\beta}\right)-
D_{\mu}\left(D^{\mu}\bar{H}^{\alpha\beta}\right)+
\frac{1}{3}\left(\bar{H}^{\alpha\beta}-
h^{\alpha\beta}\bar{H}^{\mu}{}_{\mu}\right)
\left(8\pi\mu+\Lambda-\frac{1}{3}\theta^{2}-
S^{\sigma}S_{\sigma}\right)=0\,.
\end{equation}
Taking the derivative of Eq.~(\ref{Cosmology_eq:dot_Hbar})
we find
\begin{equation}
\frac{D^{2}}{d\tau^{2}}\bar{H}^{\alpha\beta}+\frac{4}{3}
\bar{H}^{\alpha\beta}\left(\dot{\theta}-\frac{4}{3}
\theta^{2}+3S_{\gamma}S^{\gamma}\right)-\frac{22}{3}
\theta S_{\gamma}\varepsilon^{\mu\gamma\left(\alpha\right.}
\bar{H}^{\left.\beta\right)}{}_{\mu}=
2\bar{H}^{\mu}{}_{\mu}\left(h^{\alpha\beta}S^{\delta}
S_{\delta}-S^{\alpha}S^{\beta}\right)\,.
\label{neweq}
\end{equation}
Gathering these results, yields
\begin{equation}
\begin{aligned}
\left(\frac{D^{2}}{d\tau^{2}}-
D_{\mu}D^{\mu}\right)\bar{H}_{\alpha\beta}+D_{\alpha}
\left(D_{\beta}\bar{H}_{\mu}{}^{\mu}\right)+
2\bar{H}_{\left\langle \alpha\beta\right\rangle }
\left(8\pi\mu+\Lambda-\frac{1}{3}\theta^{2}+
S^{\sigma}S_{\sigma}\right)+\frac{4}{3}
\bar{H}_{\alpha\beta}\left(\dot{\theta}-
\frac{4}{3}\theta^{2}\right)\\
+\frac{1}{3}\theta\left(\bar{H}^{\mu}{}_{\mu}
\varepsilon_{\alpha\beta}{}^{\nu}-
12\varepsilon^{\mu\nu}{}_{\alpha}\bar{H}_{\mu\beta}-
10\varepsilon^{\mu\nu}{}_{\beta}
\bar{H}_{\alpha\mu}\right)S_{\nu}+
2\bar{H}^{\mu}{}_{\mu}
S_{\left\langle \alpha\right.}
S_{\left.\beta\right\rangle } & =0\,.
\end{aligned}
\label{Cosmology_eq:almost_wave_equation}
\end{equation}
Note that the operator $\frac{D^{2}}{d\tau^{2}}-D_{\mu}D^{\mu}$
is not the wave operator, since it is defined in terms of the total
connection, nonetheless, it is equal to the wave operator plus
terms in $\bar{H}$ and its first derivatives.

Now, the term $D_{\alpha}
\left(D_{\beta}\bar{H}_{\mu}{}^{\mu}\right)$
in the left-hand side of the previous equation does not have to be
zero. Since this term is a
second order derivative of $\bar{H}$,
in general the components of $\bar{H}$ are not solutions of a wave
equation. Notwithstanding, it is physically reasonable to consider
that the divergence of the spin density vector is a differentiable
function of the energy density $\mu$ and the pressure $p$, that
is $D_{\alpha}\Delta^{\alpha}=f\left(\mu,p\right)$: this expresses
the idea that $\Delta^{\alpha}$ has the matter fields as its source.
In that case, Eqs.~(\ref{Cosmology_eq:torsion_spin_density_relation}),
(\ref{Cosmology_eq:Gradient_energy_density}) and
(\ref{Cosmology_eq:Gradient_pressure})
imply that $D_{\beta}\bar{H}_{\mu}{}^{\mu}=0$. Therefore, we
have the following result:
\begin{prop}
\label{Proposition:Wave_equation_Hbar}In the considered setup, if
$D_{\alpha}\Delta^{\alpha}=f\left(\mu,p\right)$, where $f$ is an
arbitrary differentiable function, the magnetic part of the Weyl tensor
$\bar{H}$ verifies the following wave equation for the symmetric part
without trace
\begin{align}
\tilde{\square}\bar{H}_{\left\langle \alpha\beta\right\rangle
}+2\bar{H}_{\left\langle \alpha\beta\right\rangle
}\left(8\pi\mu+\Lambda-\frac{1}{3}\theta^{2}-S^{\sigma}
S_{\sigma}\right)+\frac{4}{3}\bar{H}_{\left\langle
\alpha\beta\right\rangle
}\left(\dot{\theta}-\frac{4}{3}\theta^{2}\right) =0
\label{Cosmology_eq:Wave_equation}
\end{align}
where $\tilde{\square}\bar{H}_{\alpha\beta}:=
\left(\frac{\tilde{D}^{2}}{d\tau^{2}}-
\tilde{D}_{\mu}\tilde{D}^{\mu}\right)\bar{H}_{\alpha\beta}$,
defined in terms of the Levi-Civita connection, represents the wave
operator, 
and $\bar{H}$ verifies further
the following evolution equation for the
trace
\begin{align}
\frac{D}{d\tau}\bar{H}_{\alpha}{}^{\alpha}+\frac{4}{3}
\bar{H}_{\alpha}{}^{\alpha}\theta  =0\,.
\label{Cosmology_eq:Trace_Hbar_evolution}
\end{align}
\end{prop}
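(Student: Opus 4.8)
The plan is to assemble the ingredients that are already in place just before the statement: the ``almost wave equation'' \eqref{Cosmology_eq:almost_wave_equation} is in hand, so the proof reduces to three moves — use the hypothesis to kill the obstructing second-derivative term, take the symmetric trace-free projection while trading the full-connection operator for the Levi-Civita wave operator $\tilde{\square}$, and derive the trace equation \eqref{Cosmology_eq:Trace_Hbar_evolution} separately by contraction.

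First I would establish that $D_\beta\bar{H}_\mu{}^\mu=0$. Tracing the relation \eqref{Cosmology_eq:Hbar_torsion_relation} gives $\bar{H}_\mu{}^\mu=-D_\mu S^\mu$, and by \eqref{Cosmology_eq:torsion_spin_density_relation} the torsion vector coincides with the spin-density vector, $S^\gamma=\Delta^\gamma$, so under the hypothesis $\bar{H}_\mu{}^\mu=-D_\mu\Delta^\mu=-f(\mu,p)$. Applying $D_\beta$ and using the chain rule together with the vanishing spatial gradients $D_\alpha\mu=0$ and $D_\alpha p=0$ from \eqref{Cosmology_eq:Gradient_energy_density} and \eqref{Cosmology_eq:Gradient_pressure} yields $D_\beta\bar{H}_\mu{}^\mu=0$. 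This removes the term $D_\alpha(D_\beta\bar{H}_\mu{}^\mu)$ in \eqref{Cosmology_eq:almost_wave_equation}, which was the sole second-order obstruction to a genuine wave equation.

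Next I would project \eqref{Cosmology_eq:almost_wave_equation} onto its symmetric trace-free part $\langle\alpha\beta\rangle$. The piece proportional to $\bar{H}^\mu{}_\mu\varepsilon_{\alpha\beta}{}^\nu S_\nu$ drops at once, since $\varepsilon_{\alpha\beta}{}^\nu$ is antisymmetric in $\alpha\beta$. The substantive work is to replace the operator $\frac{D^2}{d\tau^2}-D_\mu D^\mu$, built from the full connection $\nabla$, by the Levi-Civita wave operator $\tilde{\square}$. The difference is governed by the contorsion, which for $S_{\alpha\beta}{}^\gamma=S_{\alpha\beta}u^\gamma$ produces terms linear and quadratic in $S$ acting on $\bar{H}$ and its first derivatives; I expect these to combine with the explicit rotational couplings $\tfrac13\theta\,\varepsilon^{\mu\nu}{}_{(\alpha}\bar{H}_{\beta)\mu}S_\nu$ and the quadratic piece $\bar{H}^\mu{}_\mu S_{\langle\alpha}S_{\beta\rangle}$, using the constraint $\bar{H}_{\alpha\rho}S^\rho=0$ to eliminate the $S$–$\bar{H}$ contractions, so that the net effect is merely to flip the sign of the $S^\sigma S_\sigma$ term in the mass-like coefficient and leave exactly \eqref{Cosmology_eq:Wave_equation}. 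This bookkeeping — tracking every contorsion-generated term and verifying that the rotational and purely quadratic contributions cancel — is the main obstacle and where the bulk of the calculation lies.

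Finally, the trace equation \eqref{Cosmology_eq:Trace_Hbar_evolution} follows directly from \eqref{Cosmology_eq:dot_Hbar} by contracting $\alpha$ with $\beta$: the right-hand side $-2S_\gamma\varepsilon^{\mu\gamma(\alpha}\bar{H}^{\beta)}{}_\mu$ vanishes upon tracing because $\varepsilon^{\mu\gamma\alpha}$ is antisymmetric in $\mu,\alpha$ whereas $\bar{H}_{\alpha\mu}$ is symmetric, leaving $\frac{D}{d\tau}\bar{H}_\alpha{}^\alpha+\frac43\theta\bar{H}_\alpha{}^\alpha=0$, which completes the proof.
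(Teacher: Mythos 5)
Your proposal is correct and follows essentially the same route as the paper: it starts from the ``almost wave equation'' \eqref{Cosmology_eq:almost_wave_equation}, uses $\bar{H}_{\mu}{}^{\mu}=-D_{\mu}\Delta^{\mu}=-f(\mu,p)$ together with $D_{\alpha}\mu=D_{\alpha}p=0$ to kill the obstructing term $D_{\alpha}\bigl(D_{\beta}\bar{H}_{\mu}{}^{\mu}\bigr)$, trades the full-connection operator for $\tilde{\square}$ up to contorsion-generated lower-order terms, and obtains the trace equation by contracting \eqref{Cosmology_eq:dot_Hbar}. The bookkeeping you flag as the main obstacle is treated at the same level of explicitness in the paper itself, so nothing in your argument departs from or falls short of the published proof.
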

Thus, we have then found that
the traceless part of $\bar{H}$ verifies a homogeneous wave equation
and 
the trace of $\bar{H}$ verifies a first order
ODE. Before
we proceed to study the solutions of the previous set of
two equations,
we remark that the coefficient of the second term in the left-hand
side of Eq.~(\ref{Cosmology_eq:Wave_equation}) is simply the Ricci
scalar of the orthogonal hypersurfaces,
Eq.~(\ref{Cosmology_eq:Generalized_Friedmann}).

\subsection{The solutions}

Theorem~\ref{Theorem:Negative_zero_curvature} establishes that the
orthogonal hypersurfaces to $u$ cannot have positive curvature and
if these have zero Ricci curvature, $\bar{H}$ must be identically
zero. Therefore, the only nontrivial solutions of
Eqs.~(\ref{Cosmology_eq:Wave_equation})
and (\ref{Cosmology_eq:Trace_Hbar_evolution}) that are of physical
interest are those where $K=-1$. Notwithstanding, formally the treatment
below is largely independent of the sign of $K$ and we only have
to specify the allowed values of $K$ when we consider the initial
conditions. Therefore, in an effort to be pedagogical about the covariant
analysis of gravitational waves in a cosmological setting, we will
keep the discussion as general as possible and only when studying
the behavior of the solutions we will particularize to $K=-1$.

The second equation in
Proposition~\ref{Proposition:Wave_equation_Hbar},
Eq.~\eqref{Cosmology_eq:Trace_Hbar_evolution}, is a first-order ODE
and can be readily integrated in terms of the characteristic length
$\ell$. Using Eq.~\eqref{Cosmology_eq:scale_factor_expansion_definition}
we find $\bar{H}_{\alpha}{}^{\alpha}=\frac{C}{\ell^{4}}$, where
$C\in\mathbb{R}$. Note that to find
Eqs.~\eqref{Cosmology_eq:Wave_equation} and
\eqref{Cosmology_eq:Trace_Hbar_evolution} we have imposed that
$D_{\alpha}\Delta^{\alpha}=f\left(\mu,p\right)$, which implies
$D_{\alpha}\bar{H}_{\mu}{}^{\mu}=0$.  On the other hand, finding the
solutions of the wave equation
given in Eq.~(\ref{Cosmology_eq:Wave_equation}) is
more involved. In that regard, we will assume that the spatial and
proper-time, $\tau$, dependence of $\bar{H}_{\left\langle
\alpha\beta\right\rangle }$ are separable. Then, 
we will consider the eigenfunctions of the covariant Laplace-Beltrami
operator $\tilde{D}_{\mu}\tilde{D}^{\mu}$ and expand
$\bar{H}_{\left\langle \alpha\beta\right\rangle }$ over these
eigenfunctions, such that
\begin{equation}
\bar{H}_{\left\langle \alpha\beta\right\rangle
}=\sum_{k}\mathrm{h}_{k}^{\left(0\right)}
Q_{\alpha\beta}^{\left(0\right),k}+\mathrm{h}_{k}^{\left(1\right)}
Q_{\alpha\beta}^{\left(1\right),k}+\mathrm{h}_{k}^{\left(2\right)}
Q_{\alpha\beta}^{\left(2\right),k}\,,
\end{equation}
where we have used a compact notation to unify the two possibilities of
$k$ taking discrete or continuous values, such that the symbol
$\sum_{k}$ is to be understood as either a discrete sum, if the
hypersurfaces orthogonal to $u$ have positive curvature, $K=1$, or as
an integral over a continuously varying index, if these have zero or
negative curvature; also the coefficients
$\mathrm{h}_{k}^{\left(0\right)}$, $\mathrm{h}_{k}^{\left(1\right)}$
and $\mathrm{h}_{k}^{\left(2\right)}$ are in general functions of the
proper time $\tau$ and
$\dot{Q}_{\alpha\beta}^{\left(0\right),k}=
\dot{Q}_{\alpha\beta}^{\left(1\right),k}=
\dot{Q}_{\alpha\beta}^{\left(2\right),k}=0$.
Moreover, the minimum values of the eigenvalues $k^{2}$ are
$k^{2}=0,1,3$ if, respectively, the orthogonal hypersurfaces are,
for the natural topology, flat,
open or closed.  Nonetheless, bear in mind the
since $\tilde{D}_{\alpha}Q^{0}=0$, even if $k^{2}=0$ is an eigenvalue
of the Helmholtz equation, we have
$Q_{\alpha\beta}^{\left(0\right),0}=0$, see
Appendix~\ref{Section:Appendix}.
This type of decomposition is known as scalar-vector-tensor
decomposition due to some properties of the harmonics
$Q_{\alpha\beta}^{\left(0\right),k}$,
$Q_{\alpha\beta}^{\left(1\right),k}$ and
$Q_{\alpha\beta}^{\left(2\right),k}$, in particular we have that the
curl of $Q_{\alpha\beta}^{\left(0\right),k}$, defined as
$\varepsilon_{\left(\alpha\right|}{}^{\mu\nu}\tilde{D}_{\nu}
Q_{\mu\left|\beta\right)}^{\left(0\right),k}$,
is identically zero,
$\tilde{D}^{\beta}\tilde{D}^{\alpha}Q_{\alpha\beta}^{\left(1\right),k}=0$
and $\tilde{D}^{\alpha}Q_{\alpha\beta}^{\left(2\right),k}=0$. For
completeness, we list various properties of the scalar, vector and
tensor harmonics in the Appendix~\ref{Section:Appendix}.

From Eq.~(\ref{Cosmology_eq:constraint_Hbar}), we have that
$\text{curl}\,\bar{H}_{\alpha\beta}
\equiv\varepsilon_{\left(\alpha\right|}{}^{\mu\nu}D_{\nu}
\bar{H}_{\mu\left|\beta\right)}$
vanishes. Hence, $\bar{H}$ can be described solely by the scalar
harmonics $Q_{\alpha\beta}^{\left(0\right),k}$. Then, substituting the
expansion $\bar{H}_{\left\langle \alpha\beta\right\rangle
}=\sum_{k}\mathrm{h}_{k}^{\left(0\right)}Q_{\alpha\beta}^{\left(0\right),k}$
in the wave equation
given in Eq.~(\ref{Cosmology_eq:Wave_equation}), the harmonics
decouple and we find for each $k$ the equation
\begin{equation}
\ddot{\mathrm{h}}_{k}^{\left(0\right)}+
\mathrm{h}_{k}^{\left(0\right)}
\left[\frac{k^{2}}{\ell^{2}}+\frac{4}{3}\left(\dot{\theta}-
\frac{4}{3}\theta^{2}\right)\right]=0\,.\label{Wave_eq:h_ddot}
\end{equation}
Hence, the expansion coefficients verify an equation for an harmonic
oscillator with variable frequency, leading us to conclude that, in
an dynamic universe, the presence of intrinsic spin may induce the
emission of gravitational waves. Introducing the Hubble parameter
$\mathrm{H}\equiv\frac{1}{3}\theta$, the conformal time variable
$t$, defined such that $dt=\ell^{-1}d\tau$, and writing
$\mathrm{h}_{k}^{\left(0\right)}=\frac{f_{k}\left(t\right)}{\ell^{4}}$,
we find that these are the solutions of Eq.~(\ref{Wave_eq:h_ddot})
if each $f_{k}\left(t\right)$ verifies
\begin{equation}
\frac{d^{2}f_{k}}{dt^{2}}-
9\ell\mathrm{H}\frac{df_{k}}{dt}+k^{2}f_{k}=0\,.
\label{Wave_eq:h_ddot_f_ddot}
\end{equation}

Equation~(\ref{Wave_eq:h_ddot_f_ddot}) takes a surprisingly simple
form and all dependencies of the matter model are encapsulated in
the quantity $\ell\mathrm{H}$, defined as the inverse comoving Hubble
radius $R_{\mathrm{H}}$, i.e.,
$R_{\mathrm{H}}\equiv\left(\ell\mathrm{H}\right)^{-1}$.
Now, to integrate Eq.~(\ref{Wave_eq:h_ddot_f_ddot})
one has either to assume a model for the matter fluid
or to resort to
solutions valid within certain regimes.
We stick to the second alternative. 
For this, note that 
in the light of Proposition~\ref{Proposition:Behavior_H_square},
a consistent solution must be such that the functions $f_{k}$ are
bounded. Then, we can analyze the cases for which
$\frac{k^{2}}{\ell\mathrm{H}}\gg1$ and 
$\frac{k^{2}}{\ell\mathrm{H}}\ll1$.

In the regime where $\frac{k^{2}}{\ell\mathrm{H}}\gg1$
and such that the term in $\frac{df_{k}}{dt}$ is negligible,
with
no need for
specifying the matter fields that permeate the spacetime,
and further assuming $f_{k}$ and its derivatives up to second
order are bounded, the
solutions of Eq.~(\ref{Wave_eq:h_ddot}) for the higher order modes
are of the form
\begin{equation}
\mathrm{h}_{k}^{\left(0\right)}=\frac{c_{1}
\cos\left(kt\right)+c_{2}
\sin\left(kt\right)}{\ell^{4}}\,,
\label{Wave_eq:coefficient_late_time1}
\end{equation}
where the integration constants $c_{1}$ and $c_{2}$
might change for each $h_{k}^{\left(0\right)}$. This result makes
it clear that in the considered setup, a nonvanishing $\bar{H}$
characterizes gravitational waves induced by intrinsic spin. Moreover,
we see that in an expanding universe these waves are strongly damped,
as it was found in Proposition~\ref{Proposition:Behavior_H_square}. 

In the regime where
$\frac{k^{2}}{\ell\mathrm{H}}\ll1$, and also with
no need for
specifying the matter fields that permeate the spacetime,
in the light of Theorem~\ref{Theorem:Negative_zero_curvature}, the
only nontrivial solutions for Eq.~(\ref{Wave_eq:h_ddot}) that are
of physical interest in the considered model are those where $K=-1$.
In that case, the expansion coefficient $k$ takes continuous values
and $k\geq1$. Now, the comoving Hubble radius
verifies $\dot{R}_{\mathrm{H}}=-\ddot{\ell}R_{\mathrm{H}}^{2}$.
Then, in an accelerating expanding universe, $R_{\mathrm{H}}$ is
a decreasing function of the proper time. Therefore,  for $K=-1$,
the regime where
$\frac{k^{2}}{\ell\mathrm{H}}\ll1$ represents the late-time behavior
of the lower order modes of the spin induced gravitational waves in
an accelerating expanding universe. In this regime, assuming we can
neglect the term in $f_{k}$ in Eq.~(\ref{Wave_eq:h_ddot_f_ddot})
and disregarding runaway solutions, we find that
\begin{equation}
\mathrm{h}_{k}^{\left(0\right)}=\frac{\text{constant}}{\ell^{4}}\,,
\label{Wave_eq:coefficient_late_time2}
\end{equation}
where the integration constant might change for each $k$,
confirming once again that Proposition~\ref{Proposition:Wave_equation_Hbar}
is consistent with the results found in
subsection~\ref{Subsec:Geometry_3_surf}.

\section{Tidal effects and dynamics of the cosmic fluid in
relativistic cosmology in Einstein-Cartan theory}
\label{Subsec:Tidal-effects-and-dynamics}

\subsection{Tidal effects}
\label{Subsec:Tidal-effects}

In addition to the magnetic part of the Weyl tensor,
the electric part of the Weyl tensor is also not
identically zero in the presence
of torsion. Therefore, tidal effects, i.e., the relative
accelerations of nearby particles, suffer modifications
when compared to the
theory of general relativity.

The general formula for the tidal displacement in the presence of
torsion is
\begin{equation}
\frac{D^{2}n^{\delta}}{d\tau^{2}}=
n^{\mu}\nabla_{\mu}a^{\delta}+R_{\alpha\beta\gamma}{}^{\delta}
n^{\alpha}u^{\beta}u^{\gamma}+2u^{\sigma}\nabla_{\sigma}
\left(S_{\alpha\beta}{}^{\delta}u^{\alpha}n^{\beta}\right)\,,
\label{deviation1}
\end{equation}
where $n$ represents the separation vector introduced
in subsection~\ref{Subsec:The-separation-vector}.
In the setup we are considering, Eq.~\eqref{deviation1}
reduces to $\frac{D^{2}n^{\delta}}{d\tau^{2}}  =
R_{\sigma\mu\nu}{}^{\delta}n^{\sigma}u^{\mu}u^{\nu}$,
which is the familiar formula for geodesic deviation. Assuming without loss of
generality that the separation vector is initially orthogonal to the
tangent vector field $u$, i.e., $n^{\mu}u_{\mu}=0$,  this
can further be manipulated to have the form
\begin{equation}
\frac{D^{2}n^{\delta}}{d\tau^{2}}
=n^{\mu}\left(\frac{1}{2}R_{\mu}{}^{\delta}+\frac{1}{2}
R_{\mu\nu}u^{\nu}u^{\delta}-E_{\mu}{}^{\delta}\right)
-\frac{1}{2}\left(R_{\mu\nu}u^{\mu}u^{\nu}+\frac{1}{3}R\right)n^{\delta}\,.
\label{deviation2}
\end{equation}
We see that Eq.~\eqref{deviation2}
explicitly shows the influence of the electric part of the
Weyl tensor in the tidal displacement. Using
Eqs.~(\ref{Conventions_eq:Weyl_tensor_definition}),
(\ref{EC_eq:field_equations_Ricci}),
(\ref{Cosmology_eq:E_antisymmetric__torsion}) and
(\ref{Cosmology_eq:E_symmetric__torsion}), we find the following
expression for the relative acceleration between two infinitesimally
close test particles in the considered model
\begin{align}
\frac{D^{2}n^{\delta}}{d\tau^{^{2}}} &
=n^{\alpha}\left(\frac{1}{3}\varepsilon_{\alpha\gamma}{}^{\delta}
\theta\mathrm{S}^{\gamma}+\mathrm{S}_{\alpha}\mathrm{S}^{\delta}\right)
+\frac{1}{3}\Bigl(\Lambda-\mathrm{S}^{\sigma}\mathrm{S}_{\sigma}-
4\pi\left(\mu+3p\right)\Bigr)n^{\delta}\,.
\label{Cosmology_eq:geodesic_deviation_particular}
\end{align}
From Eq.~(\ref{Cosmology_eq:geodesic_deviation_particular}), we see
that the presence of intrinsic spin induced torsion causes a
distortion of the fluid as measured by an observer comoving with the
fluid that couples directly with torsion. To interpret this result, it
is clearer to consider Eq.~\eqref{omegagammaSgamma}, that is, the
presence of torsion induces a rotation of the frame of the
observer. This is a well known effect of the torsion tensor, which in
fact lead to its name: a test particle, or an element of volume of the
fluid, that couples directly with the torsion field, in general, will
have its frame rotated.  Then, the distortion of the fluid described
by Eq.~(\ref{Cosmology_eq:geodesic_deviation_particular}) is caused by
the relative acceleration of the rotation of the fiducial observer's
frame due to the presence of torsion.  This rotation of the frames is
due to the spacetime geometry, however it is not intrinsic to the
motion of the fluid, i.e., the fluid is irrotational since elements of
volume of the fluid follow metric geodesics of the spacetime, whose
metric is described by a FLRW solution. Thus, observers, that do not
couple directly with the torsion tensor, will not measure any relative
rotation between different points in the fluid.  To see this, consider
an observer that does not couple directly with the torsion tensor and
only perceives the effects of the intrinsic spin of matter through the
metric tensor, such that its world line is a metric geodesic of the
spacetime and its 4-velocity coincides with $u$, the tangent vector of
the congruence.  For this type of observer, the geodesic deviation
equation, in the considered setup, reads
$\frac{\tilde{D}^{2}n^{\delta}}{d\tau^{2}}=\frac{1}{3}
\Bigl(\Lambda+2\mathrm{S}^{\sigma}\mathrm{S}_{\sigma}-
4\pi\left(\mu+3p\right)\Bigr)n^{\delta}$, where
$\frac{\tilde{D}^{2}n^{\delta}}{d\tau^{2}}\equiv
u^{\mu}\tilde{\nabla}_{\mu}\left(u^{\nu}
\tilde{\nabla}_{\nu}n^{\delta}\right)$ and $\tilde{\nabla}$ represents
the Levi-Civita connection. We see, then, that such observer does not
measure any relative change in the rotation between nearby elements of
volume of the fluid. This type of observer will only measure a
relative acceleration of the distance between infinitesimally close
test particles. These results exactly express the discussion in
\citep{Puetzfeld_Obukhov_2007} where it was determined in the context
of any metric-affine gravity theory, particles with no intrinsic
hypermomentum will not directly experience the effects of torsion.

In addition, relative acceleration of the squared distance between
infinitesimally close test particles is a physical observable, hence
both type of observers, namely,
those that couple directly to torsion
and those that do not, 
will agree on its magnitude. Using the Raychaudhuri
equation~(\ref{Cosmology_eq:Raychaudhuri_equation}) and the generalized
Friedman equation~(\ref{Cosmology_eq:Generalized_Friedmann}) in
Eq.~(\ref{Cosmology_eq:geodesic_deviation_particular}) or, equivalently
in Eq.~(\ref{Cosmology_eq:accel_norm_expansion}), yields
\begin{equation}
\frac{D^{2}\left(n_{\delta}n^{\delta}\right)}{d\tau^{2}}=
\frac{\tilde{D}^{2}\left(n_{\delta}n^{\delta}\right)}{d\tau^{2}}
=\frac{1}{3}\Bigl(8\pi\left(\mu-3p\right)+4\Lambda+
2\mathrm{S}^{\sigma}\mathrm{S}_{\sigma}-{}^{3}R\Bigr)
n_{\delta}n^{\delta}\,,
\label{Cosmology_eq:geodesic_deviation_accel_norm}
\end{equation}
confirming that observers that couple directly with torsion and observers
that do not, will measure the same relative acceleration of the distance
between nearby elements of volume of the fluid.
Although it can also be inferred from
the Raychaudhuri equation~(\ref{Cosmology_eq:Raychaudhuri_equation}),
it is explicit in Eq.~(\ref{Cosmology_eq:geodesic_deviation_accel_norm})
that the square of the norm of the torsion vector field,
$\mathrm{S}^{\sigma}\mathrm{S}_{\sigma}$,
has the same sign of a positive cosmological constant, therefore,
the torsion field also contributes to the positive relative acceleration
of the distance between infinitesimally close test particles, an effect
that is measurable by both type of observers.

\subsection{Dynamics of the cosmic fluid}

In the previous subsection, we have found that the torsion vector
field may contribute to a positive cosmological constant. We are then
interested in understanding if it is possible to have solutions with
zero cosmological constant,  $\Lambda=0$, such that the relative
accelerated expansion measured in our universe is completely fueled by
the torsion field.

To analyze this problem, 
we introduce the following dimensionless parameters
\begin{equation}
\begin{aligned}q & \equiv-1-\frac{\dot{\theta}}{3\mathrm{H}^{2}}\,, &
\Omega_{\mathrm{K}} & \equiv-\frac{^{3}R}{6\mathrm{H}^{2}}\,, &
\Omega & \equiv\frac{8\pi\mu}{3\mathrm{H}^{2}}\,, & \Omega_{\Lambda}
& \equiv\frac{\Lambda}{3\mathrm{H}^{2}}\,, & \Omega_{\mathrm{S}}
& \equiv\frac{\mathrm{S}^{\sigma}\mathrm{S}_{\sigma}}{3\mathrm{H}^{2}}\,,
\end{aligned}
\end{equation}
where $q$ is the acceleration parameter, $\mathrm{H}$ is
the Hubble parameter defined in subsection~\ref{Subsec:Wave-equation},
$\Omega$, $\Omega_{\Lambda}$ and $\Omega_{\mathrm{S}}$ represent,
the matter fields,
dark
energy, and intrinsic spin dimensionless densities,
respectively, where we note that
the density parameter
$\Omega$ accounts for the contribution to the energy density of all
matter fields, be it baryons, photons, dark matter or neutrinos,
and note also that $\Omega_{\mathrm{S}}\geq0$.
We further define the  effective equation
of state parameter $\chi$ as
\begin{equation}
\chi\left(\tau\right)\equiv\frac{p}{\mu}\,.
\label{chi}
\end{equation}
Then, we can rewrite the Raychaudhuri and the Friedman
equations given in Eqs.~(\ref{Cosmology_eq:Raychaudhuri_equation}) and
(\ref{Cosmology_eq:Generalized_Friedmann}) as
$\frac{3}{2}\left(\chi+\frac{1}{3}\right)\Omega-
\Omega_{\Lambda}-2\Omega_{\mathrm{S}}=q\,,$
and  $\Omega_{\mathrm{K}}+\Omega+\Omega_{\Lambda}-\Omega_{\mathrm{S}}  =1$,
 respectively.
The measured empirical results indicate that the universe is very close
to being Ricci flat, hence
one can put here $\Omega_{\mathrm{K}}=0$.
So, the Raychaudhuri and the  Friedman equations turn into
\begin{align}
&\frac{3}{2}\left(\chi+\frac{1}{3}\right)\Omega-
\Omega_{\Lambda}-2\Omega_{\mathrm{S}} =q\,.
\label{Omegaandq}
\\
&\Omega+\Omega_{\Lambda}-\Omega_{\mathrm{S}}  =1\,,
\label{Omegas}
\end{align}
respectively. 
We see that in the Raychaudhuri equation~\eqref{Omegaandq},
 $\Omega_{\mathrm{S}}$ has the same sign of the cosmological constant
 term $\Omega_{\Lambda}$, and thus contributes to the acceleration of
 the expanding universe, as it could be expected, since the spin can
 be thought of as a source of centrifugation for the universe.  On the
 other hand, in the Friedman equation~\eqref{Omegas},
 $\Omega_{\mathrm{S}}$ has a minus sign relative to the cosmological
 constant term $\Omega_{\Lambda}$, as it is also expected, since the
 spin can be thought of as a kinetic term and thus contributes to the
 balance of the kinetic energy of the universe, which is represented
 in the 1 of the right-hand side of the equation.
Let us see more clearly
the effect of $\Omega_{\mathrm{S}}$ by
turning off $\Lambda$.
Setting $\Omega_{\Lambda}=0$ in Eqs.~\eqref{Omegaandq} and \eqref{Omegas}
we find
$\frac{3}{2}\left(\chi+\frac{1}{3}\right)
\Omega-2\Omega_{\mathrm{S}}  =q$
and
$\Omega-\Omega_{\mathrm{S}}  =1$, respectively.
In this
case, the Friedman equation with $\Omega_{\Lambda}=0$
and $\Omega_{\mathrm{S}}\geq0$
necessarily
implies that $\Omega\geq1$.
On the other hand, the 
matter density in our universe is known to be less than one, $\Omega<1$.
Therefore solutions with $\Omega_{\Lambda}=0$ are excluded. 
This fact, namely, that solutions of the considered model where
$\Omega_{\Lambda}=0$ are excluded, of
course, does not mean that the model is excluded.
This simply implies
that if a torsion field of the considered type exists,
$\Omega_{\mathrm{S}}$
cannot
solely
contribute 
to the expansion of the universe described by the
Raychaudhuri equation, given in Eq.~\eqref{Omegaandq},
and to the energy balance of the
Friedman equation given in Eq.~\eqref{Omegas},
$\Omega_{\Lambda}$ has also to contribute.
How much is this $\Omega_{\mathrm{S}}$
contribution  depends on the matter model.
Following \citep{Ray_Smalley_1983,Obukhov_Korotky_1987}
if we consider that the
cosmological spinning fluid is a medium whose elements are galaxies
and galaxy clusters, the torsion tensor would be caused by their
macroscopic angular momenta. In that case, the spin density
$\Omega_{\mathrm{S}}$
could
provide a nonnegligible contribution to the accelerated expansion of
the universe.
On the other hand,  if the only source of the
torsion tensor is the intrinsic spin of elementary particles,
where one finds that 
the
value of $\Omega_{\mathrm{S}}$ is much smaller than $\Omega$,
see \citep{Hehl_Heyde_Kerlick_1974,Kerlick_1973}, its
contribution to the accelerated expansion of the universe would be
negligible.

To conclude the analysis, we remark that independently of the actual
value of the dimensionless spin density $\Omega_{\mathrm{S}}$, if
$\Omega_{\mathrm{S}}$ is nonzero, i.e.,
$\Omega_{\mathrm{S}}>0$, the results in
Theorems~\ref{Theorem:Negative_zero_curvature} and
\ref{Theorem:Global_not_closed} are still verified. So, even though
the contribution of the torsion tensor to the accelerated expansion of
the universe may be negligible, it still markedly changes the geometry
and the allowed topology of the spacetime.

\section{Conclusions\label{Section:Conclusion}}

We presented the general set of structure equations for the 1+3
spacetime decomposition in 4 spacetime dimensions, valid for any
theory of gravitation based on a metric compatible affine connection,
showing in complete generality the relations between the kinematical
quantities of the timelike congruence, the torsion tensor and the Weyl
and the Ricci tensors.

The new equations were then used to study solutions of the
Einstein-Cartan theory with a cosmological perfect fluid having an
intrinsic spin, such that the geometry of the spacetime is described
by both the metric and the torsion tensor field. The model showed that
even in the presence of a torsion field originated by the intrinsic spin
of matter, the metric tensor can be described by a general, spatially
isotropic and homogeneous, FLRW solution. Here we would like to
highlight that although we have assumed that the torsion tensor
has the intrinsic spin of the fluid's constituents as its source, this does
not imply that the fluid's elements of volume must have a nonzero
intrinsic spin density. As was shown in
\citep{Hehl_Heyde_Kerlick_1974}, even if in an element of volume
containing many particles with the intrinsic spins of the individual
particles are randomly oriented, such that the average spin density is
zero, the variance is not zero, hence the torsion tensor is not
zero. This, in our view, is the correct approach to the
Einstein-Cartan theory, where the fluid is described by a
semi-classical model, whose elements of volume contain many particles.
Of course, it could also be the case that the individual spins are
aligned, and the average intrinsic spin density is not zero. In either
case, in the considered model the torsion tensor is not zero.
Although the metric tensor was found to be described by a FLRW model,
it was shown that the Weyl tensor might not vanish, which leads to
very strong constrains on the allowed geometry and topology of the
spacetime. Indeed, due to the coupling between the torsion and Weyl
tensors, in the considered model, the universe must either be flat or
open.

In the open case, we then derived a wave equation for the traceless
part of the magnetic part of the Weyl tensor, concluding that the
presence of intrinsic spin of matter may induce gravitational waves,
providing, to our knowledge, the first explicit result showing that
the torsion field may source or influence the emission of
gravitational waves in a cosmological setting. Although these waves
are strongly damped in an
expanding universe, this result may provide a smoking gun for the
presence of spacetime torsion.

In the considered model, it was also possible to determine that
a torsion tensor field originated from intrinsic spin
contributes to the
positive accelerated expansion of the universe, nonetheless, comparing the
theoretical predictions of the model with the current experimental
data, the torsion tensor cannot completely replace the role of the
cosmological constant.

\begin{acknowledgments}
We acknowledge Funda\c c\~ao para a Ci\^encia e Tecnologia - FCT, Portugal,
for financial support through Project No.~UIDB/00099/2020.
\end{acknowledgments}

\appendix

\section{\label{Section:Appendix}Properties of the
Laplace-Beltrami harmonics}
\label{laplacebeltramiharmonics}

\subsection{Scalar harmonics}

In this appendix we list some of the properties of the scalar, vector
and tensor eigenfunctions of the covariantly defined Laplace-Beltrami
operator on 3-hypersurfaces of constant curvature used to define the so
called scalar-vector-tensor decomposition. For concreteness,
we consider a spacetime endowed with a FLRW metric, such that the
homogeneous spatial sections represent such 3-hypersurfaces. 

Let $Q^{k}$ represent the scalar eigenfunctions of the covariantly
defined Laplace-Beltrami operator
$\tilde{D}^{2}\equiv\tilde{D}_{\alpha}\tilde{D}^{\alpha}$,
where $\tilde{D}$ represents the covariant derivative operator associated
with the Levi-Civita connection, such that
\begin{equation}
\tilde{D}^{2}Q^{k}=-
\frac{k^{2}}{\ell^{2}}Q^{k}\,,
\end{equation}
and $\dot{Q}^{k}=0$, where $\ell$ represents the scale factor defined
in Eq.~(\ref{Cosmology_eq:scale_factor_expansion_definition}) and the
harmonic index $k$ may take discrete or continuous values depending on
whether $K=+1$, or $K\in\left\{ -1,0\right\} $, respectively, where $K$
was introduced in Eq.~\eqref{Cosmology_eq:R3_constant_general}.  Then, we can
define the following tensors from the scalar eigenfunctions $Q^{k}$,
\begin{equation}
\begin{aligned}Q^{\left(0\right),k} & _{\alpha}=
-\frac{\ell}{k}\tilde{D}_{\alpha}Q^{k}\,,\\
Q^{\left(0\right),k}{}_{\alpha\beta} & =
\frac{\ell^{2}}{k^{2}}\tilde{D}_{\beta}\tilde{D}_{\alpha}Q^{k}+
\frac{1}{3}h_{\alpha\beta}Q^{k}\,,
\end{aligned}
\label{Appendix_eq:Vector_tensor_scalar_harmonics}
\end{equation}
with the following properties
\begin{equation}
\begin{aligned}\dot{Q}^{\left(0\right),k}{}_{\alpha} & =0\,,\\
\tilde{D}^{\mu}Q^{\left(0\right),k}{}_{\mu} & =
\frac{k}{\ell}Q^{k}\,,\\
\tilde{D}^{2}Q^{\left(0\right),k}{}_{\alpha} & =
\frac{2K-k^{2}}{\ell^{2}}Q^{\left(0\right),k}{}_{\alpha}\,,\\
\tilde{D}_{[\alpha}
\tilde{D}_{\beta]}Q^{\left(0\right),k}{}_{\gamma} &
=\frac{K}{2\ell^{2}}\left(h_{\alpha\gamma}
Q^{\left(0\right),k}{}_{\beta}-h_{\beta\gamma}
Q^{\left(0\right),k}{}_{\alpha}\right)\,,
\\
Q^{\left(0\right),k}{}_{\mu}{}^{\mu} & =0\,,\\
\dot{Q}^{\left(0\right),k}{}_{\alpha\beta} & =0\,,\\
\varepsilon_{\left(\alpha\right|}{}^{\mu\nu}D_{\nu}
Q^{\left(0\right),k}{}_{\mu\left|\beta\right)} & =0\,,\\
\tilde{D}^{\mu}Q^{\left(0\right),k}{}_{\alpha\mu} &
=\frac{2}{3\ell k}\left(k^{2}-3K\right)
Q^{\left(0\right),k}{}_{\alpha}\,,\\
\tilde{D}^{2}Q^{\left(0\right),k}{}_{\alpha\beta} &
=\frac{6K-k^{2}}{\ell^{2}}Q^{\left(0\right),k}{}_{\alpha\beta}\,,
\end{aligned}
\end{equation}
where in the previous expressions the $k^{-1}$ factor
is not a problem
because $\tilde{D}_{\alpha}Q^{0}=0$.

\subsection{Vector harmonics}

Given a sufficiently smooth 1-form field $Y_{\alpha}$ in a FLRW
spacetime, we can in general decompose it as
\begin{equation}
Y_{\alpha}=
\sum_{k}\mathrm{T}_{k}^{\left(0\right)}Q^{\left(0\right),k}{}_{\alpha}
+\mathrm{T}_{k}^{\left(1\right)}Q_{\alpha}^{\left(1\right),k}\,,
\end{equation}
where the vectors $Q_{\alpha}^{\left(0\right),k}$
are obtained from
the scalar eigenfunctions $Q^{k}$,
Eq.~(\ref{Appendix_eq:Vector_tensor_scalar_harmonics}),
and $Q^{\left(1\right),k}{}_{\alpha}$ represent the solutions of
the vector Helmholtz equation
\begin{equation}
\tilde{D}^{2}Q^{\left(1\right),k}{}_{\alpha}=-
\frac{k^{2}}{\ell^{2}}Q^{\left(1\right),k}{}_{\alpha}\,,
\end{equation}
with the following properties
\begin{equation}
\begin{aligned}\dot{Q}^{\left(1\right),k}{}_{\alpha} & =0\,,\\
\tilde{D}^{\mu}Q^{\left(1\right),k}{}_{\mu} & =0\,.
\end{aligned}
\end{equation}
Similarly to the scalar eigeinfunctions $Q^{k}$, we can find a set
of 2-tensors associated with $Q^{\left(1\right),k}{}_{\alpha}$:
\begin{equation}
Q^{\left(1\right),k}{}_{\alpha\beta}=-
\frac{\ell}{2k}\left(\tilde{D}_{\alpha}
Q^{\left(1\right),k}{}_{\beta}+\tilde{D}_{\beta}
Q^{\left(1\right),k}{}_{\alpha}\right)\,,
\end{equation}
with the following properties
\begin{align}
Q^{\left(1\right),k}{}_{\mu}{}^{\mu} & =0\,,\nonumber \\
\dot{Q}^{\left(1\right),k}{}_{\alpha\beta} & =0\,,\nonumber \\
\tilde{D}^{\mu}Q^{\left(1\right),k}{}_{\alpha\mu} &
=\frac{k^{2}-2K}{2\ell k}Q^{\left(1\right),k}{}_{\alpha}\,,\\
\tilde{D}^{2}Q^{\left(1\right),k}{}_{\alpha\beta} &
=\frac{4K-k^{2}}{\ell^{2}}Q^{\left(1\right),k}{}_{\alpha\beta}
\,.\nonumber 
\end{align}

\subsection{Tensor harmonics}

Given a general smooth 2-tensor field $Y_{\alpha\beta}$ in a FLRW
spacetime, we can in general decompose it as
\begin{equation}
Y_{\alpha\beta}=\sum_{k}\mathrm{T}_{k}^{\left(0\right)}
Q^{\left(0\right),k}{}_{\alpha\beta}+\mathrm{T}_{k}^{\left(1\right)}
Q^{\left(1\right),k}{}_{\alpha\beta}+\mathrm{T}_{k}^{\left(2\right)}
Q^{\left(2\right),k}{}_{\alpha\beta}\,,
\end{equation}
where the 2-tensors $Q^{\left(2\right),k}{}_{\alpha\beta}$
are defined
as the solutions of the tensor Helmholtz equation
\begin{equation}
\tilde{D}^{2}Q^{\left(2\right),k}{}_{\alpha\beta}=-
\frac{k^{2}}{\ell^{2}}Q^{\left(2\right),k}{}_{\alpha\beta}\,.
\end{equation}
These verify
\begin{align}
Q^{\left(2\right),k}{}_{\mu}{}^{\mu} & =0\,,\nonumber \\
\dot{Q}^{\left(2\right),k}{}_{\alpha\beta} & =0\,,\\
\tilde{D}^{\mu}Q^{\left(2\right),k}{}_{\alpha\mu} & =0\,.
\nonumber 
\end{align}

\end{document}